%% file: main.tex
\documentclass[a4paper,UKenglish,cleveref,autoref,thm-restate]{lipics-v2021}
\hideLIPIcs
\nolinenumbers
\usepackage{mathtools}
\usepackage{booktabs}
\usepackage{tikz}
\usetikzlibrary{arrows.meta,positioning,calc}
\newcommand{\cN}{\mathcal N}
\newcommand{\cC}{\mathcal C}
\newcommand{\cH}{\mathcal H}
\newcommand{\cB}{\mathcal B}
\newcommand{\cE}{\mathcal E}

\newcommand{\Z}{\mathbb Z}
\newcommand{\ket}[1]{\lvert #1\rangle}
\newcommand{\bra}[1]{\langle #1\rvert}
\newcommand{\ketbra}[2]{\lvert #1\rangle\!\langle #2\rvert}

\newcommand{\SWAP}{\mathrm{SWAP}}
\newcommand{\iSWAP}{\mathrm{iSWAP}}
\newcommand{\CZ}{\mathrm{CZ}}
\DeclareMathOperator{\tr}{tr}
\DeclareMathOperator{\supp}{supp}

\DeclareMathOperator{\girth}{girth}

\newcommand{\Counting}{\textsc{Counting}}
\newcommand{\Election}{\textsc{Election}}
\title{A Quantum Analogue of the Covering Lemma: Limits of Exact Anonymous Computation}
\titlerunning{A Quantum Analogue of the Covering Lemma}
\author{Giuseppe Di Molfetta}{Aix Marseille Univ, CNRS, LIS, Marseille, France
\and Institut Universitaire de France, Paris, France
\and Corresponding author: \href{mailto:giuseppe.dimolfetta@lis-lab.fr}{\nolinkurl{giuseppe.dimolfetta@lis-lab.fr}}}{}{}{}
\author{Petra Camilli}{Universit\'e Paris-Saclay, CNRS, ENS Paris-Saclay, 91190 Gif-sur-Yvette, France}{}{}{}
\authorrunning{G. Di Molfetta and P. Camilli}
\Copyright{Giuseppe Di Molfetta and Petra Camilli}
\ccsdesc[500]{Theory of computation~Distributed algorithms}
\ccsdesc[500]{Theory of computation~Quantum computation theory}
\keywords{anonymous networks, counting, leader election, graph coverings, support lifting, quantum distributed computing, matrix product states}
\EventShortTitle{}
\EventLongTitle{}
\EventAcronym{}
\EventYear{}
\SeriesVolume{}
\ArticleNo{}
\AtBeginDocument{%
  \hypersetup{%
    pdfauthor={Giuseppe Di Molfetta, Petra Camilli},%
    pdfsubject={Research preprint},%
    hidelinks%
  }%
}
\begin{document}
\maketitle
\begin{abstract}
Can quantum communication determine the size of an anonymous network when no size bound is known? Classical impossibility proofs rely on the covering lemma: an execution can be reproduced on a larger network with the same local structure. Quantum interference invalidates this argument—a measurement record possible on a network can become impossible on a cover. We show that the impossibility of exact counting nevertheless survives.
Our main result is a quantum analogue of the covering lemma that preserves local possibilities rather than complete executions. Within a cyclic covering family, every local measurement outcome possible at a node remains possible at a corresponding node in every sufficiently large network. This holds even after information has travelled around the smaller network, beyond the reach of a simple locality argument. We extend the principle to the universal covering tree and finite covers with sufficiently long cycles. The proof combines tensor-network methods with graph coverings: a common representation across network sizes turns normalization into a constraint on which local outcomes can occur.
For uniform synchronous protocols with independently prepared nodes, this support-lifting principle rules out zero-error counting and even the production of a guaranteed finite size bound on anonymous rings and the other cyclic covering families considered here. It also rules out exact leader election with termination in finite time with probability one on these families. Measurements, classical control, and local memory growing with time are allowed; no predetermined running-time bound is assumed. Quantum interference thus defeats execution lifting without enabling infallible certification of network size.
\end{abstract}
\clearpage
\input{stacs/sections/01-introduction}
\input{stacs/sections/02-model}
\input{stacs/sections/03-covering}
\input{stacs/sections/04-support}

\input{stacs/sections/05-universal}
\input{stacs/sections/06-consequences}
\input{stacs/sections/07-discussion}

\label{end:body}
\clearpage

\input{main.bbl}
\clearpage
\appendix
\input{stacs/sections/appendix}

\end{document}

%% file: stacs/sections/01-introduction.tex
\section{Introduction}\label{sec:intro}

An anonymous network consists of processors without unique identifiers. Nodes with the same degree and input use the same local program and initialization, although their states may diverge during an execution. A central tool for proving impossibility in such networks is the covering lemma. Given a finite classical execution of positive probability on a network, repeating the local choices at corresponding nodes produces an execution of positive probability on every covering network with the lifted data. Local decisions are therefore repeated across the copies: an output reporting the smaller network’s size becomes incorrect on a larger cover, while a unique leader is replicated once per sheet. This argument underlies classical impossibility results for anonymous computation~\cite{angluin1980local,itai1990symmetry,yamashita1996computing}.

Quantum communication changes the picture, at least when something is known about the size. Exact quantum leader election~\cite{tani2012exact} and solitude verification~\cite{tani2017solitude} need only a known upper bound on the number of nodes, and the bound may be arbitrarily loose. Classical zero-error randomized election  on rings is possible with a tight size bound $n\le N<2n$ ~\cite[Sections 1.2 and 2.3]{aharonov2017dining}, while deterministic election remains impossible on symmetric rings even when $n$ is known; what quantum protocols buy is the looseness of the bound, not the ability to use one. Aharonov, Ganz and Magnin ask whether any size bound is needed at all: can an exact quantum protocol determine the size of an anonymous ring, or just an upper bound on it, with no prior size information~\cite{aharonov2017dining}? {We answer this question negatively for protocols that are uniform under lifts, start from independently prepared nodes, and use finite local resources at every fixed horizon. Measurements, classical messages, and unbounded running time are allowed.} The networks we compare are cyclic covers: for a base network $\cN$, the $m$-sheet cyclic cover $\cN_m$ consists of $m$ copies of $\cN$ in which some links lead to the next copy around instead of closing inside their own, so that the rings $C_{nm}$ are the cyclic covers of $C_n$. We assume the tower $(\cN_m)_{m\geq1}$ connected and the local dynamics lifted uniformly along it, and nothing about correctness. A node's local quantum state is described by a density matrix. Its \emph{support} is the subspace spanned by eigenvectors with nonzero eigenvalues. Support inclusion has an operational meaning: 
if $\operatorname{supp}\rho \subseteq \operatorname{supp}\sigma$,
every local measurement outcome that has positive probability
in state $\rho$ also has positive probability in state $\sigma$,
for the same measurement. The probabilities need not be equal. 

Writing $\rho_m(v,s)$ for the state of a node above $v$ after $s$ rounds on $\cN_m$, our main theorem is the inclusion
\begin{equation}\label{eq:intro-support}
 \supp\rho_k(v,s)\subseteq\supp\rho_m(v,s)
 \qquad(k\geq1,\;m\geq2s+1).
\end{equation}
{A possible local outcome---one having positive probability---cannot certify an exact size: it remains possible on arbitrarily large networks. All sufficiently large tower members have the same local state (\cref{lem:injrad}), and every member's support lies in that common support (\cref{thm:supportlift}), even after information has travelled around the smaller network (\cref{fig:overview}).}

{When every tower member is an admissible instance, no zero-error protocol can output its size or a finite upper bound, and total exact computation of a function forces constancy along the tower. Exact leader election is also excluded, by a leader-to-counting reduction. These conclusions concern tower members: they do not exclude partial certificates on other graphs or partial zero-error certificates for eventually constant functions.}

{The support inclusion extends to the universal covering tree and every finite cover of girth greater than $2s$. An explicit two-round protocol has orthogonal local states on $C_2$ and $C_4$, showing that the threshold cannot be weakened uniformly from $2s+1$ to $2s$.}

\input{stacs/figures/overview}
The need for a new argument is seen in a two-round protocol whose global measurement record has positive probability on{ $C_3$ and zero probability on its cover $C_6$. The cancellation comes from interference between lifted amplitude contributions. Section 3 gives the construction and its exact probability formula. }

{Our proof uses tensor-network methods~\cite{fannes1992finitely,perezgarcia2007mps}. A uniform finite-horizon dilation gives one matrix-product tensor for all sheet counts. Normalization constrains its transfer spectrum; an irreducible-block reduction then supplies positive-definite boundary operators and the support inclusion. Locality gives the explicit threshold in~\eqref{eq:intro-support}. To reach the universal cover, we factor finite homology covers into cyclic covers until the required girth is reached.}

We formulate the model as a network of processors made of finite quantum cells, in the spirit of quantum cellular automata~\cite{arrighi2019overview,farrelly2020review}; graph-based quantum-walk models of distributed computation are close relatives~\cite{roget2024scheme,roget2026model}. Non-signaling lower bounds~\cite{coiteuxroy2024coloring} are another causal framework, but they compare inputs on the same vertex set, allow pre-shared quantum states, and do not give the cross-size support statement proved here; conversely, our hypothesis of independent preparation is stronger than the mere absence of entanglement.

%% file: stacs/figures/overview.tex
\begin{figure}[t]
\centering
\begin{tikzpicture}[scale=0.85,every node/.style={font=\footnotesize,transform shape},
  nd/.style={circle,draw,thick,fill=white,inner sep=0pt,minimum size=2.1mm},
  vn/.style={circle,draw,thick,fill=black,inner sep=0pt,minimum size=2.1mm},
  cone/.style={circle,draw=none,fill=black!14,inner sep=0pt,minimum size=5.2mm},
  arr/.style={-{Latex[length=2mm]},thick},
  bad/.style={-{Latex[length=2mm]},thick,densely dashed}]
 \begin{scope}
  \foreach \i in {0,1,2} { \coordinate (p\i) at ({90+120*\i}:0.62); }
  \foreach \i in {0,1,2} { \node[cone] at (p\i) {}; }
  \draw[thick] (p0)--(p1)--(p2)--cycle;
  \node[vn] at (p0) {}; \node[nd] at (p1) {}; \node[nd] at (p2) {};
  \node[above=1pt] at (p0) {$v$};
  \node[align=center] at (0,-1.25) {{$C_3$: whole ring in the cone}};
  \begin{scope}[shift={(4.3,0)}]
   \foreach \i in {0,...,6} { \coordinate (q\i) at ({90+360/7*\i}:0.95); }
   \foreach \i in {0,1,2,5,6} { \node[cone] at (q\i) {}; }
   \draw[thick] (q0)--(q1)--(q2)--(q3)--(q4)--(q5)--(q6)--cycle;
   \node[vn] at (q0) {}; \foreach \i in {1,...,6} { \node[nd] at (q\i) {}; }
   \node[above=1pt] at (q0) {$v$};
   \node[align=center] at (0,-1.25) {{$C_m$, $m\ge2s+1$}\\{stabilized local state}};
  \end{scope}
  \draw[arr] (0.95,0.05) -- (3.0,0.05);
  \node[align=center] at (2.0,0.65) {$\supp\rho_3(v,2)$\\$\subseteq\supp\rho_m(v,2)$};
  \node at (2.15,-1.85) {(a) support lifting along a tower, $s=2$};
 \end{scope}
 \begin{scope}[shift={(9.7,0.75)}]
  \node[anchor=east] at (0,0) {classical execution};
  \draw[arr] (0.1,0) -- node[above,inner sep=1pt] {$p>0$} (1.5,0);
  \node[anchor=west] at (1.6,0) {every cover};
  \node[anchor=east] at (0,-0.8) {quantum record on $C_3$};
  \draw[bad] (0.1,-0.8) -- node[above,inner sep=1pt] {$p=0$} (1.5,-0.8);
  \node[anchor=west] at (1.6,-0.8) {$C_6$};
  \node[anchor=east] at (0,-1.6) {local outcome on $\cN_k$};
  \draw[arr] (0.1,-1.6) -- node[above,inner sep=1pt] {$p>0$} (1.5,-1.6);
  \node[anchor=west] at (1.6,-1.6) {$\cN_m$, $m\ge2s+1$};
  \node at (0.6,-2.6) {(b) what lifts to covers and what does not};
 \end{scope}
\end{tikzpicture}
\caption{{(a) The tower $\cN_m=C_m$ over the one-node loop graph, with $s=2$. The source cone contains the whole $C_3$. For $m\ge2s+1$, the relevant causal data agree with those of the infinite path: only links incident to $B(v,s-1)$ matter. Every local outcome possible on $C_3$ remains possible on those $C_m$ (\cref{thm:supportlift}). (b) A finite classical execution lifts to every cover. A quantum global record can vanish (\cref{thm:nolift}: positive probability on $C_3$, zero on $C_6$); local support lifts beyond the threshold.}}
\label{fig:overview}
\end{figure}

%% file: stacs/sections/02-model.tex
\section{Model and finite-time locality}\label{sec:model}
{We consider anonymous nodes with quantum registers and classical control. They are prepared independently, but synchronous rounds of local operations and link interactions can create entanglement. The model separates internal processing from communication. We define the network, local dynamics, and correctness requirements, then establish finite-time locality and introduce graph coverings.}
\subsection{Port graphs and local processors}
\begin{definition}[Port graph]\label{def:portgraph}
A finite \emph{port graph} is $\cN=(V,A,\iota,\nu)$, where $V$ is the node set, $A$ the port set, $\iota:A\to A$ a fixed-point-free involution, and $\nu:A\to V$ the owner map. The ports at $v$ are $A_v=\nu^{-1}(v)$; its degree is $|A_v|$. A link is a pair $\{a,\iota(a)\}$.
\end{definition}
All port graphs are connected. Loops and parallel links are allowed, with their ports kept distinct, and so is the edgeless one-node graph. Distances are those of the underlying multigraph, and $B(W,r)$ is the node ball of radius $r$ around $W$. The ring $C_n$ has nodes $\Z_n$, ports $(v,+),(v,-)$, and $\iota(v,+)=(v+1,-)$; thus $C_1$ is one node with a loop and $C_2$ two nodes joined by parallel links. These small cases are convenient for the cyclic tower; the counting obstruction applies just as well to classes containing only simple rings of length at least three.
A degree-$d$ processor consists of finitely many quantum cells, $d$ of them designated as port cells, and a bijection between the incident ports and the port cells, called an \emph{attachment}. Each cell has a finite-dimensional Hilbert space with a computational basis, and all port cells have a common dimension $q$. A node step applies a unitary to the cells of the node; the lower bounds below hold for arbitrary node unitaries, and we do not restrict how they are compiled. In the explicit protocols the cells are qubits on a path and the node unitaries are circuits of one-cell gates and two-cell gates between adjacent cells, which generate every unitary of the node once swaps are used for routing\footnote{Connectivity alone does not guarantee universality for arbitrary cell dimensions. For example, on a three-cell path with dimensions $2,1,2$, gates acting on adjacent cells cannot couple the two endpoint qubits. Our impossibility proofs do not require internal universality: they allow arbitrary operations within each node.}. A port cell also carries a classical message slot. The link operation is a fixed two-port-cell unitary $E$ with $E\SWAP=\SWAP E$, together with the swap of the classical slots; the symmetry makes the link independent of an orientation. The examples use $E=\SWAP$, and the proofs allow any such $E$.

{Fresh internal quantum registers may be initialized locally in fixed pure states during execution, subject to the uniform finite-horizon resource bound below. Their allocation is part of the uniform local program.}

\begin{definition}[Round and algorithm]\label{def:algorithm}
Each node has a countable classical state space $K$ containing its registers, message slots and output register. In state $k$ at round $t$, it applies a local unitary $G(k,t)$, measures finitely many cells in the computational basis with outcome $r$, and updates its classical state by a deterministic map $\delta(k,t,r)$. The node channel is
\begin{equation}\label{eq:node-channel}
 \Phi_{v,t}(\rho)=\sum_{k,r}K_{k,r}\rho K_{k,r}^{\dagger},
 \qquad K_{k,r}=P_rG(k,t)\otimes\ketbra{\delta(k,t,r)}{k},
\end{equation}
where $P_r$ is the measurement projector. A round applies the tensor product of the node channels and then the link layer $\cE$:
\[
 \Phi_t=\cE\circ\bigotimes_v\Phi_{v,t},
 \qquad \rho_t=\Phi_t\circ\cdots\circ\Phi_1(\rho_0).
\]
A \emph{uniform algorithm} assigns the processor, local program, initial classical state and local quantum state as functions of the degree and of the input symbol alone, the input alphabet being finite. The local rules thus define dynamics for every attachment, on lifted instances outside a task promise as much as inside; correctness is a separate requirement. The initial state is
\begin{equation}\label{eq:product}
 \rho_0=\bigotimes_{v\in V}\ketbra{\psi_{\deg(v),x(v)}}{\psi_{\deg(v),x(v)}}.
\end{equation}
Classical initial states are part of these local factors. Entanglement \emph{within} a node is allowed; correlations \emph{between} nodes are not.
\end{definition}
Each node’s output register is initially set to $\bot$, meaning “undecided.” A node may replace $\bot$ with an output value, after which that value cannot change. For a function task, a \emph{zero-error} algorithm never outputs a wrong value, with probability one on every admissible instance and every attachment; it \emph{solves the task exactly} if, in addition, every node outputs after a finite time with probability one, a deterministic bound on that time not being required. Nodes may keep forwarding auxiliary messages after they have output. As in the LOCAL model~\cite{linial1992locality,peleg2000distributed}, classical computation and classical message length are free. 

The lower bounds rest on three hypotheses: uniformity under lifts, product preparation, and \emph{finite local resources at every fixed horizon}. The last means that for every base node type and every horizon $s$, the total Hilbert space dimension available at a node, together with the sets of reachable classical states and messages, is bounded by a constant that does not depend on the cover under consideration. These bounds may grow with $s$, allowing larger memories and messages in later rounds. However, at any fixed horizon, they must be independent of the number of sheets. With fixed cells the condition holds automatically, by finite branching. The hypotheses do not cover an initially shared correlated state, even a separable one, nor an oracle supplying a size-dependent parameter, which is input information and need not lift identically along the tower.

\subsection{Light cones and local lifting}\label{sec:locality}
Write $X_W$ for all cells and classical registers at nodes in $W$, and $\mathcal A_{X_W}$ for the operators supported there. At a fixed finite horizon the reachable spaces are finite-dimensional, uniformly over the covers considered (\cref{lem:purification}).

\begin{lemma}[Light cone]\label{thm:cone}
After $s$ rounds, the expectation of an observable supported at $W$ depends only on the local data of the nodes in $B(W,s)$ and on the links with at least one endpoint in $B(W,s-1)$; for $s=0$ no link is involved. In particular, the dual channel sends $\mathcal A_{X_W}$ into $\mathcal A_{X_{B(W,s)}}$.
\end{lemma}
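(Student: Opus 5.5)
We prove the statement in the Heisenberg picture, by induction on $s$, tracking the support of the evolved observable under the dual channels. Fix $W$ and an observable $O\in\mathcal A_{X_W}$. The expectation after $s$ rounds is
\[
\tr(\rho_s O)=\tr\bigl(\rho_0\,\Phi_1^{*}\circ\cdots\circ\Phi_s^{*}(O)\bigr).
\]
We will show that $O_j:=\Phi_{s-j+1}^{*}\circ\cdots\circ\Phi_s^{*}(O)$ lies in $\mathcal A_{X_{B(W,j)}}$. We will also show that $O_j$ is built only from the node maps of nodes in $B(W,j)$ at rounds $s-j+1,\dots,s$, and from link unitaries on links with an endpoint in $B(W,j-1)$.

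\textbf{Single round.} Write $\Phi_t^{*}=\bigl(\bigotimes_v\Phi_{v,t}\bigr)^{*}\circ\cE^{*}$. The link layer $\cE$ is conjugation by a product of commuting two-port-cell unitaries $E$, one per link, together with swaps of classical slots on disjoint pairs. Its dual therefore fixes every factor whose link touches no cell of the current support. It sends an operator on $X_U$ to one on $X_{B(U,1)}$, and it involves only the links with at least one endpoint in $U$. The links with both endpoints outside $U$ cancel against their adjoints.

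The node channels are a tensor product of unital CPTP maps. Each dual $\Phi_{v,t}^{*}$ is unital, so it acts as the identity on operators not touching node $v$. Hence $\bigl(\bigotimes_v\Phi_{v,t}\bigr)^{*}$ preserves $\mathcal A_{X_{U'}}$ for every $U'$, and it depends only on the programs $G(k,t),\delta$ at nodes in $U'$. Combining the two facts, one round maps $\mathcal A_{X_U}$ into $\mathcal A_{X_{B(U,1)}}$. It uses only the links incident to $U$ and the node data in $B(U,1)$.

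\textbf{Induction.} We iterate the single-round step starting from $U=W$. At the $j$-th step from the end, the links used have an endpoint in $B(W,j-1)$, and the node data used lie in $B(W,j)$. After $s$ steps this gives $O_s\in\mathcal A_{X_{B(W,s)}}$, determined by the programs at nodes of $B(W,s)$ and by the links with an endpoint in $B(W,s-1)$. For $s=0$ no link layer is applied, which gives the degenerate case of the statement.

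\textbf{Conclusion.} Because $\rho_0$ is a product state as in \eqref{eq:product}, we have $\tr(\rho_0 O_s)=\tr\bigl(\rho_0|_{B(W,s)}\,O_s\bigr)$. This quantity depends only on the initial local states of the nodes in $B(W,s)$, which are part of their local data.

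The only delicate point is bookkeeping rather than analysis. The classical state spaces are countable, so we invoke the finite-horizon resource bound (\cref{lem:purification}) to restrict to finite-dimensional reachable spaces, on which the duals are well defined. We must also check that loops and parallel links fall under "link with an endpoint in $U$"; the swap-symmetry $E\SWAP=\SWAP E$ makes this orientation-free.
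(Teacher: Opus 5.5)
Your proof is correct and takes the same Heisenberg-picture approach as the paper. Link gates not incident to the current support cancel in the dual of the link layer, incident ones enlarge the support by one graph step, the unital node duals preserve the support, and the product initial state restricts the expectation to $B(W,s)$. One wording slip: the node channels themselves are generally not unital, since the classical update $\delta$ need not be injective; your argument only uses unitality of their duals, which follows from trace preservation, so nothing breaks.
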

\begin{proof}
Follow an observable backwards. In a link layer, the gates disjoint from its support commute with it and cancel, and the remaining gates enlarge its node support by at most one graph step; the dual of a node channel is unital, so node operations outside the current support act on the identity and do not enlarge it. After $s$ rounds the radius is $s$. In the last link layer followed backwards only the links incident to $B(W,s-1)$ occur; a link joining two boundary nodes at distance $s$ is never needed. Taking the expectation in the product state~\eqref{eq:product} uses only the initial states of the nodes in $B(W,s)$.
\end{proof}

\begin{definition}[Covering and lifted data]\label{def:covering}
A \emph{covering} $\varphi:\widetilde\cN\to\cN$ consists of surjections on nodes and on ports that commute with the owner maps and the involutions and restrict to a bijection on the ports at each node. The lift of an instance copies its inputs, processors, initial states and attachments through these maps. The \emph{injectivity radius} at $\tilde v$ is the largest $r$ for which the node map is injective on $B(\tilde v,r)$, or $\infty$ if there is no largest finite $r$.
\end{definition}

\begin{proposition}[Local lifting]\label{prop:locallift}
If a covering is injective on $B(\tilde v,s)$, the local states at $\tilde v$ and at $\varphi(\tilde v)$ after $s$ rounds are equal under the natural identification of their local systems.
\end{proposition}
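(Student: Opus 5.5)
The plan is to derive the statement from the light-cone lemma (\cref{thm:cone}) together with the fact that a covering injective on a ball is, locally, an isomorphism of port graphs carrying all the lifted data. Let $\tilde v$ be the node upstairs and $v=\varphi(\tilde v)$. Write $\tilde B=B(\tilde v,s)$ and $B=B(v,s)$.

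\textbf{Step 1: $\varphi$ restricts to an isomorphism of the relevant data.} Since $\varphi$ is a covering, it maps paths from $\tilde v$ to paths from $v$ of the same length. Conversely, by unique path lifting through the port bijections at each node, every path from $v$ lifts to a path from $\tilde v$. Hence $\varphi(\tilde B)=B$, and by hypothesis $\varphi|_{\tilde B}$ is a bijection onto $B$.

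Next consider the ports. The bijection on $A_{\tilde u}$ for each $\tilde u$ gives a bijection between the ports owned by $\tilde B$ and those owned by $B$, and it commutes with $\iota$. So a link $\{\tilde a,\iota\tilde a\}$ with an endpoint in $B(\tilde v,s-1)$ maps to a link with an endpoint in $B(v,s-1)$. Both endpoints of such a link lie in $\tilde B$, respectively $B$.

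I then need this correspondence on links to be a bijection. Surjectivity follows from lifting ports at the endpoint inside $B(v,s-1)$. Injectivity follows from injectivity on ports, which in turn comes from injectivity on nodes together with the port bijections at each node.

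Because the lifted instance copies inputs, degrees, processors, attachments and initial product factors through $\varphi$, this bijection identifies the local systems $X_{\tilde B}$ with $X_B$. Under this identification the initial factors, node channels $\Phi_{\tilde u,t}$ and link gates $E$ coincide. Uniformity under lifts (\cref{def:algorithm}) is what makes the node channels agree, since they depend only on degree, input and time.

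\textbf{Step 2: apply the light cone.} Take any observable $O$ on $X_{\tilde v}$, identified with the corresponding observable on $X_v$. By \cref{thm:cone}, its expectation after $s$ rounds is computed by the backward Heisenberg evolution of $O$. This evolution involves only node channels at nodes of $\tilde B$ and links incident to $B(\tilde v,s-1)$, and it ends with the expectation in the product initial state restricted to $\tilde B$. Every ingredient of this computation is matched by Step 1, so the two expectations are equal. Since this holds for all $O$, the reduced states coincide.

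\textbf{Main obstacle.} The delicate point is Step 1: checking that injectivity on nodes alone excludes the cases that could break the identification. These are two distinct links of $B(v,s-1)$ lifting to the same link, or a link that closes a loop downstairs but not upstairs, near the boundary at distance $s$.

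The lemma's observation handles the boundary: links joining two nodes both at distance exactly $s$ are never needed, which is precisely why injectivity on $B(\cdot,s)$ suffices. For the interior, I would first argue that a loop or parallel link at a node $u\in B(v,s-1)$ has its lifts landing in $B(\tilde v,s)$. Injectivity on nodes then forces these lifts to be loops or parallel links at the same node as downstairs, so the structure is reproduced exactly.

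A secondary technicality is the finite-dimensionality of the truncated systems. This is assumed via the finite-resource hypothesis, so the Heisenberg picture is well defined.
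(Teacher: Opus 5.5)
Your proposal is correct and follows essentially the same route as the paper: use the covering to identify the balls of radius $s$, the links incident to $B(\cdot,s-1)$ and all lifted local data, then invoke the light-cone lemma to conclude that all local expectations, and hence the reduced states, agree. Your Step 1 spells out in more detail what the paper states in one sentence: path lifting, the port bijections, and why loops and parallel links near $v$ lift to loops and parallel links.
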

\begin{proof}
A covering maps node balls onto node balls, so injectivity identifies the two balls of radius $s$ bijectively, and every link incident to the ball of radius $s-1$ has a unique lift there, with its other endpoint in the ball of radius $s$. The causal data of \cref{thm:cone}, initial product factors and attachments included, therefore agree, and so do all local expectations and the reduced states. For $s=0$ this is just local preparation.
\end{proof}
The argument then applies to a locally finite infinite covering graph: a local state at a finite time is defined by its finite causal cone, without any global infinite tensor-product state.

%% file: stacs/sections/03-covering.tex
\section{Failure of global record lifting}\label{sec:coverings}

The classical covering argument carries every finite record of positive probability from a network to each of its covers. Here is a quantum protocol for which this fails, while \cref{prop:locallift} of course still holds.

\paragraph*{A two-round protocol.}
Each ring node has port qubits $p_1,p_2$ and internal qubits $d_1,d_2$, arranged along the path $p_1-d_1-d_2-p_2$ and all initially in $\ket0$. In the node step of round~1 the node prepares
\[
 \ket\Phi_{d_1p_1}\otimes\ket\Phi_{d_2p_2},
 \qquad \ket\Phi=(\ket{00}+\ket{11})/\sqrt2,
\]
and the link step swaps the two port cells of each link. In round~2 each node tests the rank-one projector
\begin{equation}\label{eq:record-projector}
 \Pi=\ketbra{a}{a}_{p_1d_2}\otimes\ketbra{a}{a}_{p_2d_1},
 \qquad \ket a=(\ket{00}+i\ket{11})/\sqrt2,
\end{equation}
and stores the result: it rotates $\ket a$ to $\ket{00}$ on each pair, measures the four qubits, and records whether it saw $0000$. The pairs are not adjacent on the path, but they can be routed inside the processor. Exchanging $p_1$ with $p_2$ and $d_1$ with $d_2$ simultaneously leaves preparation and test unchanged, so the protocol depends neither on a consistent orientation of the ring nor on a particular attachment. The second link step comes after the outcomes have been stored and does not affect them.

\begin{theorem}[Failure of global record lifting]\label{thm:nolift}
For this protocol, the probability that every node of $C_n$ records $\Pi$ is
\begin{equation}\label{eq:probability}
 p_n=\frac{|1+(-i)^n|^4}{16^n}.
\end{equation}
It vanishes exactly when $n\equiv2\pmod4$. In particular, the all-$\Pi$ record has positive probability on $C_3$ and probability zero on its double cover $C_6$.
\end{theorem}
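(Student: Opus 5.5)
The plan is to compute $p_n$ exactly. The global state after the first link layer is a product of Bell pairs arranged on closed cycles, and the all-$\Pi$ probability is the squared overlap of that state with a product of rank-one projectors. I will identify ports so that $p_1$ is port $(v,-)$ and $p_2$ is port $(v,+)$; the symmetry noted before the statement shows the other attachment gives the same answer. Since $\iota(v,+)=(v+1,-)$, the round-1 link layer moves the qubit in $p_2(v)$ to $p_1(v+1)$ and the qubit in $p_1(v+1)$ to $p_2(v)$. After round~1, therefore, $d_2(v)$ forms a Bell pair with the qubit now sitting in $p_1(v+1)$, and $d_1(v+1)$ forms a Bell pair with the qubit now sitting in $p_2(v)$. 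Any node-local orientation choice of $E=\SWAP$ is irrelevant because $\ket\Phi$ is symmetric.

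Next I will read off the combinatorics. The test at node $v$ projects $(p_1(v),d_2(v))$ and $(p_2(v),d_1(v))$ onto $\ket a$. Alternating Bell pairs and test pairs, the qubits split into two disjoint closed chains of length $2n$:
\[
d_2(v-1)\overset{\Phi}{-}p_1(v)\overset{a}{-}d_2(v)\overset{\Phi}{-}p_1(v+1)\overset{a}{-}\cdots
\]
and the analogous chain through $d_1(v)$ and $p_2(v-1)$. Each chain contains $n$ Bell pairs and $n$ test pairs. The probability factorizes as $|A_1|^2|A_2|^2$, where $A_j$ is the overlap amplitude of chain $j$.

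To evaluate a cycle amplitude I use the standard vectorization correspondence $\ket\psi=\sum_{ij}M_{ij}\ket{ij}\mapsto M$. Under it $\ket\Phi\mapsto I/\sqrt2$ and $\ket a\mapsto\operatorname{diag}(1,i)/\sqrt2$. Contracting around a closed cycle gives
\[
A=\tr\bigl((I/\sqrt2)(\overline{D}/\sqrt2)\bigr)^{\!n}\text{-type product}=2^{-n}\tr\operatorname{diag}(1,-i)^n=2^{-n}\bigl(1+(-i)^n\bigr),
\]
with $D=\operatorname{diag}(1,i)$. Transposes do not matter because all matrices are diagonal or symmetric. Both chains give the same value, so
\[
p_n=|A|^4=\frac{|1+(-i)^n|^4}{16^n}.
\]
Finally, $1+(-i)^n=0$ exactly when $(-i)^n=-1$, that is $n\equiv2\pmod4$. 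Checking the cases: $n=3$ gives $|1+i|^4/16^3=4/16^3>0$, while $n=6$ gives $0$.

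I expect the main obstacle to be the bookkeeping of which qubits end up paired after the swap. This includes confirming that the two chains really close into cycles of length exactly $n$, rather than a single cycle of length $2n$, and that no conjugation or orientation sign is lost. A single $2n$-cycle would give $1+(-i)^{2n}$ and a different vanishing pattern. I will verify the pairing first on $C_1$, which is one node with a loop, checking it against a direct calculation before doing the general contraction.
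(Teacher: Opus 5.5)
Your proposal is correct and follows essentially the same route as the paper's proof: the same post-swap pairing $(d_2(j),p_1(j+1))$ and $(p_2(j),d_1(j+1))$, the same split into two disjoint closed chains, and the same contraction of each chain to $2^{-n}\tr\bigl(\operatorname{diag}(1,-i)^n\bigr)$, giving $p_n=|A|^4$. The only blemish is notational: your intermediate expression should be the trace of a power, $\tr\bigl(((I/\sqrt2)(\overline D/\sqrt2))^n\bigr)$, not a power of a trace.
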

\begin{proof}
Number the nodes modulo $n$ and attach $p_2$ of node $j$ to $p_1$ of node $j+1$, which the symmetry just noted allows. After the first link step there is a Bell pair on $(d_2\text{ at }j,\,p_1\text{ at }j+1)$ and another on $(p_2\text{ at }j,\,d_1\text{ at }j+1)$ for every $j$.

The projections~\eqref{eq:record-projector} close these pairs into two disjoint chains, one through all the $d_2,p_1$ qubits and one through all the $d_1,p_2$ qubits. Along one chain the Bell tensors identify consecutive binary indices, so with $D=\operatorname{diag}(1,-i)$ the chain contracts to
\[
 2^{-n}\tr(D^n)=2^{-n}\bigl(1+(-i)^n\bigr),
\]
the factor $2^{-n}$ collecting $n$ normalizations $1/\sqrt2$ from the Bell states and $n$ from the projections. The other chain contributes the same factor. The total amplitude is $4^{-n}(1+(-i)^n)^2$, and its modulus squared is~\eqref{eq:probability}.
\end{proof}

In this expansion each chain carries one constant binary label, so there are exactly four nonzero contributions, indexed by $a,b\in\{0,1\}$, with amplitudes
\begin{equation}\label{eq:four-histories}
 4^{-n}(-i)^{n(a+b)}.
\end{equation}
On $C_3$ their phases are $1,i,i,-1$ and add up to $2i$. On $C_6$ they are $1,-1,-1,1$ and add up to zero. All four contributions on $C_6$ lift from $C_3$: the cancellation takes place \emph{among lifted contributions}, not between lifted and non-lifted histories. What matters is that complex amplitudes are being added instead of nonnegative probabilities. This is a global event, no single node is assumed to see the whole record. Note also that \cref{thm:nolift} neither proves nor disproves anything about local supports beyond the injectivity radius; that needs a different argument, the subject of the next section.

%% file: stacs/sections/04-support.tex
\section{Support lifting along cyclic towers}\label{sec:support}
We now establish the quantum replacement for classical execution lifting. Rather than reproducing complete executions, we show that every possible local observation remains possible on sufficiently large networks in the same cyclic covering family. The first step is to represent measurements and classical control by unitary evolution on enlarged local systems, from which the original protocol can be recovered. Crucially, this representation must be chosen uniformly across the family: the local construction cannot depend on the network size. We then combine a common tensor-network representation with finite-time locality to prove the support-lifting theorem.
\subsection{Uniform finite-horizon dilation}\label{sec:purification}
\begin{lemma}[Uniform finite-horizon dilation on a family of covers]\label{lem:purification}
Let $\mathfrak F$ be a family of finite covers $\pi_H:H\to\cN$ of a base instance, with lifted local data, and let $s$ be a horizon. Consider a uniform, product-prepared protocol whose dynamics is defined up to time $s$ on every $H\in\mathfrak F$ and satisfies the finite-resource condition up to time $s$ uniformly over $\mathfrak F$. Then there is a closed protocol, with unitary node steps on enlarged node spaces, a pure product initial state and the same uniformity, whose node and port dimensions do not depend on $H\in\mathfrak F$, together with local channels $R_{v,t}$ ($t\le s$), the same at every node above $v$ in every $H$, such that
\[
 \rho_H(w,t)=R_{v,t}\!\left(\widehat\rho_H(w,t)\right)
 \qquad\bigl(H\in\mathfrak F,\ w\in\pi_H^{-1}(v)\bigr),
\]
where $\widehat\rho_H$ denotes the local state of the closed protocol. Nothing is assumed about covers outside $\mathfrak F$, and no correctness on any task is used. The dilation preserves the communication pattern and the original quantum link operation; classical message swaps are represented as swaps of additional basis registers.
\end{lemma}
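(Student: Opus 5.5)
The plan is a Stinespring-type dilation carried out node by node and round by round. It uses fixed finite-dimensional registers whose sizes come from the finite-resource bound at horizon $s$, so that nothing depends on $H\in\mathfrak F$.

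\emph{Fixing the registers.} By the finite-resource hypothesis, for each base node $v$ and each $t\le s$ the reachable classical states $K_{v,t}\subseteq K$, the reachable message values and the quantum cells in use are finite sets. Their sizes are bounded independently of $H$. We therefore fix a finite classical alphabet $\bar K_v\supseteq\bigcup_{t\le s}K_{v,t}$ and a finite message alphabet $M$, together with a maximal cell layout, where fresh registers are allocated in advance in their fixed pure states. Each classical register is embedded as a computational-basis register $\mathbb C^{\bar K_v}$, and each classical message slot as a basis register $\mathbb C^{M}$ attached to the corresponding port cell. The classical swap in the link layer then becomes a genuine swap of these basis registers. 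This swap commutes with $\SWAP$, so together with $E$ it gives an orientation-free link unitary that is independent of $H$. The original quantum link operation $E$ on the port cells is kept unchanged, which preserves the communication pattern.

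\emph{Dilating the node channels.} The node channel \eqref{eq:node-channel} is a controlled operation: conditioned on the classical basis state $k$, apply $G(k,t)$, measure, and write $\delta(k,t,r)$. We dilate it by adding, for each round $t\le s$, a fresh ancilla register $\mathbb C^{R}\otimes\mathbb C^{\bar K_v}$ initialized in $\ket{0}$. It holds a coherent copy of the outcome $r$ and a record of the previous classical state $k$. The unitary step is
\[
U_{v,t}=\sum_{k}\Bigl(G(k,t)\otimes\ketbra{k}{k}\Bigr)\text{ followed by } \sum_{r}P_r\otimes X_{r}\otimes W_{k,r},
\]
where $X_r$ writes $r$ into the outcome ancilla, and $W_{k,r}$ maps $\ket{k}\otimes\ket{0}_{\rm rec}$ to $\ket{\delta(k,t,r)}\otimes\ket{k}_{\rm rec}$. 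The map $W_{k,r}$ is reversible because the old value is kept, and it extends to a unitary on the finite space. Tracing out the round-$t$ ancillas reproduces $\Phi_{v,t}$ exactly, since distinct $(k,r)$ give orthogonal ancilla records. This is the standard proof that Kraus channels with orthogonal environment labels arise by partial trace. Every ancilla is prepared locally in a pure state, so the initial state stays a pure product state, and all dimensions depend only on $v$ and $s$.

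\emph{Recovery maps and the induction.} Let $R_{v,t}$ be the partial trace over the ancillas introduced at rounds $\le t$ (and over unused preallocated space), followed by the isometric identification of basis registers with classical states. We then prove $\rho_H(w,t)=R_{v,t}(\widehat\rho_H(w,t))$ globally by induction on $t$, with the statement that the original global state equals the partial trace of the dilated global state over all ancillas. The step holds because the ancillas of each node are acted on only by that node's own unitary, so tracing them out commutes with the link layer and with the other nodes' operations. Restricting to the node $w$ gives the local claim. The channel $R_{v,t}$ depends only on the type $v$, so it is the same at every lift.

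The main obstacle is uniformity: the dilation must not depend on $H$, even though in principle reachable sets could differ between covers. Here I would rely on the assumption that the finite bound is uniform over $\mathfrak F$, and take the union of the reachable sets over all $H$; this union is still finite. Outside the reachable sets, the unitaries are completed arbitrarily but uniformly.
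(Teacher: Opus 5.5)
Your construction is essentially the paper's. The per-round ancilla holding $(r,k)$ makes your controlled $G(k,t)$ followed by the outcome and record writes equal to the isometry $V_{v,t}\ket\psi=\sum_{k,r}K_{k,r}\ket\psi\otimes\ket{k,r}_{\mathrm{env}(v,t)}$. As in the paper, classical data and messages become basis registers swapped on the links, and the induction rests on the fact that tracing out a node's old environments commutes with later node steps and with the link layer.

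The step that does not follow as you justify it is the last one. You read the resource condition as a bound on the \emph{sizes} of the reachable sets, uniform in $H$, and conclude that their union over $\mathfrak F$ is finite. That inference is invalid: one-element sets $\{k_H\}$, distinct for different $H$, satisfy a uniform size bound but have an infinite union. Without a common finite superset you cannot fix one alphabet $\bar K_v$, one message alphabet and one cell layout for the whole family, and that is the whole content of the lemma.

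The paper takes the resource condition as directly supplying such common supersets, and justifies them by finite branching. The initial classical state depends only on the type of $v$. Each classical state selects a measurement with finitely many outcomes, and $\delta$ is deterministic. Whatever arrives through a port above $a$ is a message slot of a neighbour lying above $\nu(\iota(a))$, whose possible post-step states are generated in the same way from the previous round's sets.

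Induction over $t\le s$ therefore gives finite sets, depending only on the base and the protocol, that contain every classical state and message reachable on every cover. The union you want lies inside these sets. If you replace the union by them and take the cell layouts from these finitely many states, the rest of your argument goes through unchanged.
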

\begin{proof}
The resource condition provides finite-dimensional active node and port spaces, together with  finite sets containing classical states and messages reachable by time $s$ rounds, common to all members of $\mathfrak F$. (With fixed cells these come from finite branching: the base has finitely many node types and bounded degree, each classical state selects a measurement with finitely many outcomes, the transition map is deterministic, and received messages are tuples from the finite sets of the previous round; induction over the rounds gives the supersets. Finiteness on each cover separately would not give a uniform bound.) We complete the local operations on the unused padding subspaces to trace-preserving maps, identically at corresponding nodes.

Purify each initial node state locally, identically at corresponding nodes, and replace each node channel~\eqref{eq:node-channel} by the isometry
\[
 V_{v,t}\ket\psi=\sum_{k,r}K_{k,r}\ket\psi\otimes\ket{k,r}_{\mathrm{env}(v,t)},
\]
which satisfies $V_{v,t}^\dagger V_{v,t}=I$ by completeness and extends to a unitary with local pure ancillas, again identically at corresponding nodes. Classical messages become basis registers swapped on the links, and later gates may be controlled by the \emph{active} classical registers. The environment $\mathrm{env}(v,t)$ stays at node $v$ and is never touched again. Tracing it out recovers the node channel of round $t$, and tracing out the older environments commutes with the node step and with the link layer, so
\[
 \tr_{\mathsf E}\widehat\rho_{H,t+1}
 =
 \Phi_{H,t+1}\!\left(
   \tr_{\mathsf E}\widehat\rho_{H,t}
 \right),
 \qquad 0\leq t<s,
\]
where $\mathsf E$ comprises all environment and purification registers preallocated for the fixed horizon $s$, including those not yet used. Unused registers remain in their fixed pure initial states.
By induction the original joint state is recovered after every round, even when the active registers are correlated with old environments. The partial trace over the environments of node $v$, composed if one wishes with the dephasing of its retained classical registers, is $R_{v,t}$. All ancillas are preallocated in local pure states and retained throughout the closed evolution through round $s$. Ancillary registers are traced out only when recovering the original states. Every choice depends on the base, the protocol, $\mathfrak F$ and $s$, but not on the member $H$: it is one dilation for the whole family, living in enlarged node spaces and claiming nothing about compilation on the original cells.
\end{proof}

For positive operators in finite dimension,
\begin{equation}\label{eq:support-order}
 \supp\sigma\subseteq\supp\tau
 \quad\Longleftrightarrow\quad
 \sigma\leq C\tau\ \text{for some finite }C>0,
\end{equation}
the forward implication by restricting to the support of $\tau$ and bounding $\tau^{-1/2}\sigma\tau^{-1/2}$, the converse by comparing kernels. Positive maps preserve the inequality, so a support inclusion between the dilated local states survives the channel of \cref{lem:purification}, partial trace and dephasing included, and we may work with closed protocols from now on.

\subsection{Cyclic covers and stabilization of local states}
\begin{definition}[Cyclic tower]\label{def:cycliccover}
Let $c:A\to\{-1,0,1\}$ satisfy $c(\iota(a))=-c(a)$. The $m$-sheet cyclic cover $\cN_m$ has nodes $V\times\Z_m$, ports $A\times\Z_m$, and
\[
 \nu_m(a,j)=(\nu(a),j),\qquad
 \iota_m(a,j)=(\iota(a),j+c(a)).
\]
Projection onto the first coordinate is a covering, and $\cN_1=\cN$. Each set $V\times\{j\}$ is a sheet. The tower is \emph{connected} if every $\cN_m$ is connected; all local data are lifted from $\cN$. For a task class $\cC$, a connected tower is \emph{$\cC$-admissible} if every lifted instance $\cN_m$ belongs to $\cC$. Connectedness is a geometric condition; admissibility is needed only when correctness is invoked.
\end{definition}
{Members of the tower share a common base but need not cover one another.}
A link of voltage zero stays inside its sheet; a link of nonzero voltage joins consecutive sheets. Giving voltage $1$ to one orientation of a non-bridge link, $-1$ to the reverse port and $0$ to everything else produces a connected tower; the rings $C_{nm}$ arise from $C_n$ in this way, and a torus can be cut along the links between two consecutive coordinate hyperplanes.

\begin{lemma}[Local stabilization]\label{lem:injrad}
For $m'\geq1$, the covering $\cN_{mm'}\to\cN_m$ obtained by reducing sheet indices modulo $m$ has injectivity radius at least $\lfloor(m-1)/2\rfloor$. For a fixed base node $v$ and horizon $s$, the reduced states $\rho_m(v,s)$ therefore have a common value $\rho_\infty(v,s)$ for all $m\geq2s+1$. For every $m$, the state is also independent of the chosen sheet above $v$.
\end{lemma}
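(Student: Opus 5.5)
The plan is to bound injectivity from below by a path-lifting argument, and then deduce the stabilization and the sheet independence from \cref{prop:locallift} together with the deck symmetry of the tower.

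\textbf{Injectivity radius.} Let $\pi:\cN_{mm'}\to\cN_m$ reduce sheet indices modulo $m$. It commutes with owners and involutions, since $\iota_{mm'}(a,j)=(\iota(a),j+c(a))$ reduces to $\iota_m(a,j\bmod m)$, so it is a covering. Suppose $\pi$ is not injective on $B(\tilde v,r)$. Then there are two nodes $(u,j)\neq(u,j')$ in the ball with $j\equiv j'\pmod m$ and $j\neq j'$ in $\Z_{mm'}$. Join each of them to $\tilde v$ by a path of length at most $r$, and concatenate the two paths into a walk of length at most $2r$ from $(u,j)$ to $(u,j')$. Project this walk to the base $\cN$: it becomes a closed walk at $u$ whose total voltage $\sum c(a)$ equals $j'-j$ in $\Z_{mm'}$. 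This difference is a nonzero multiple of $m$ modulo $mm'$.

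Each step changes the sheet index by at most $1$, so the integer voltage sum $\ell$ satisfies $|\ell|\le 2r$, and $\ell\equiv j'-j\pmod{mm'}$. Since $\ell\equiv0\pmod m$, either $\ell=0$ or $|\ell|\ge m$. The case $\ell=0$ would give $j'=j$, a contradiction. Hence $2r\ge m$, i.e.\ $r\ge m/2$. Non-injectivity on a radius-$r$ ball therefore forces $r\ge\lceil m/2\rceil$, so the injectivity radius is at least $\lceil m/2\rceil-1=\lfloor (m-1)/2\rfloor$. The main care point is precisely this lifting of integer sums versus residues: the walk must be lifted in $\Z$ rather than in $\Z_{mm'}$.

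\textbf{Stabilization.} Now take $m,m'\ge 2s+1$. Both $\cN_{mm'}\to\cN_m$ and $\cN_{mm'}\to\cN_{m'}$ have injectivity radius at least $s$. Applying \cref{prop:locallift} to each map gives $\rho_m(v,s)=\rho_{mm'}(v,s)=\rho_{m'}(v,s)$. We use here that lifted data are lifted through both maps, and that the covering identifies corresponding local systems. Note that $m$ and $m'$ need not divide one another: the common cover $\cN_{mm'}$ handles this.

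\textbf{Sheet independence.} The shift $(u,j)\mapsto(u,j+1)$ is an automorphism of $\cN_m$. It preserves ports, involution and all lifted local data, and it commutes with the global product-form dynamics. It therefore carries the state at $(v,j)$ to the state at $(v,j+1)$, so the reduced state does not depend on the chosen sheet. This also justifies writing $\rho_m(v,s)$ without reference to a sheet.
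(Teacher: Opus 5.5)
Your proof is correct and follows the paper's argument. The integer voltage sum along a walk of length $d$ has absolute value at most $d$, so distinct lifts with the same residue modulo $m$ lie at distance at least $m$. Both states are then compared through the common cover $\cN_{mm'}$ using \cref{prop:locallift}, and sheet translation gives independence of the chosen sheet; your explicit lift of the voltage sum to $\Z$ just makes precise what the paper calls the ``integer sheet coordinate.''
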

\begin{proof}
A path of length $\ell$ changes the integer sheet coordinate by at most $\ell$. Two distinct lifts of a node with the same image modulo $m$ differ by a nonzero multiple of $m$ in that coordinate, so they are at distance at least $m$, and a ball of radius $r$ with $2r<m$ contains no such pair. For $m,m'\geq2s+1$, compare the two local states through $\cN_{mm'}$ with \cref{prop:locallift}. Finally, translating the sheets preserves the lifted instance and the product initial state, hence the evolved state.
\end{proof}

\subsection{A common matrix-product representation}
Fix a closed protocol and a horizon $s$. Let $\cH_{\mathrm{sheet}}$ be the Hilbert space of one entire sheet, dilation cells included, with basis $\{\ket x\}$, and write $\ket{\Psi_m}$ for the state on $\cN_m$ after $s$ rounds. The boundary indices record the finitely many terms introduced when gates crossing between sheets are decomposed. We group the circuit by sheets. Decomposing the gates between neighboring sheets leaves two auxiliary indices for each sheet, one for each boundary. For a fixed final basis label $x$, the resulting coefficients form a matrix $A_x$. Multiplying these matrices and taking a trace reconnects the sheets into a cycle. {The representation is exact; no truncation or approximation is used.}
\begin{lemma}[Common matrix-product tensor]\label{lem:mps}
There are a finite-dimensional bond space $\cB$ and matrices $A_x$ on $\cB$, independent of $m$, such that
\begin{equation}\label{eq:mps}
 \ket{\Psi_m}=\sum_{x_1,\ldots,x_m}
 \tr(A_{x_1}\cdots A_{x_m})\ket{x_1,\ldots,x_m}
 \qquad(m\geq1).
\end{equation}
\end{lemma}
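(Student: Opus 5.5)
We construct the tensor sheet by sheet from the closed protocol of \cref{lem:purification}, which has a pure product initial state, unitary node steps and unitary link gates $E$ (classical slots included as swapped basis registers). The node and port dimensions of this protocol are the same for every $m$. After $s$ rounds,
\[
\ket{\Psi_m}=U_s\cdots U_1\ket{\psi_0}^{\otimes},
\]
where each $U_t$ is a product of node unitaries followed by link gates. We split the link gates into two kinds. Links of voltage $0$ stay inside a sheet. Links of voltage $\pm1$ join sheet $j$ to sheet $j+1$, and we orient each such link so that it goes from sheet $j$ to $j+1$, using $c(\iota(a))=-c(a)$.

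The key step is an operator-Schmidt decomposition of each inter-sheet gate. Write
\[
E=\sum_{\alpha=1}^{q^4} L_\alpha\otimes R_\alpha,
\]
with $L_\alpha$ acting on the port cell in sheet $j$ and $R_\alpha$ on the port cell in sheet $j+1$. Classical-slot swaps are handled the same way, with finitely many terms, because their dimensions are uniform. Substituting this decomposition for every crossing gate over all $s$ rounds gives a multi-index. Let $\beta_j$ collect the indices of all crossing gates between sheets $j$ and $j+1$. Since $c$ only takes values in $\{-1,0,1\}$, the number of such gates is at most $s\cdot|A|$, so $\beta_j$ ranges over a finite set. Its size depends only on the base and on $s$, not on $m$. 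Take $\cB$ to be the span of this index set.

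Expanding the product, $\ket{\Psi_m}$ becomes a sum over $(\beta_1,\dots,\beta_m)$, with $\beta_0\equiv\beta_m$ cyclically. Each term is a tensor product over the sheets. On sheet $j$ the factor is a state obtained by applying the node unitaries, the intra-sheet gates, the operators $R_{\beta_{j-1}}$ from the left boundary and the operators $L_{\beta_j}$ from the right boundary, in the original time order, to $\ket{\psi_0}$ restricted to the sheet. Operators on different sheets commute, and within a sheet the time order is preserved, so this factorization is exact. We then define
\[
(A_x)_{\beta\beta'}=\bra{x}\,\mathcal{C}(\beta,\beta')\ket{\psi_0^{\mathrm{sheet}}},
\]
where $\mathcal{C}(\beta,\beta')$ is this sheet circuit with incoming index $\beta$ and outgoing index $\beta'$.

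The sheet circuit is the same on every sheet and for every $m$. This follows from uniformity and lifted data: $\cN_m$ has the same local structure on each sheet and the same voltage pattern, so the $A_x$ do not depend on $m$ or on $j$. The sum over the cyclically closed indices is exactly $\tr(A_{x_1}\cdots A_{x_m})$, which gives \eqref{eq:mps}.

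The main obstacle is the bookkeeping of time ordering and of small $m$. For $m=1$, a crossing link is a link inside $\cN$ between port $a$ and port $\iota(a)$ of the same sheet. We must check that the trace of a single $A_x$ reproduces $E$, that is, that $\sum_\alpha L_\alpha\otimes R_\alpha$ is applied to the correct pair of cells. This holds because $L$ and $R$ act on distinct cells and are applied at the correct round within the same sheet circuit. For $m=2$, parallel links between the two sheets arise from different base links and carry distinct index components, so there are no clashes. We will verify both of these cases explicitly. We will also verify that the left/right convention is consistent when $c(a)=-1$, by reorienting such a link to its partner port.
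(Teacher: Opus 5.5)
Your proposal is correct and follows the paper's proof essentially step for step. You operator-Schmidt decompose each sheet-crossing link gate, message-register swaps included, and absorb one factor into each incident sheet in time order. You then contract each sheet into $A_x$ with two boundary indices and close the cycle with a trace; uniformity gives independence of $m$, and distinct half-edge cells handle $m=1$ and parallel links. One cosmetic slip: the operator-Schmidt rank of $E$ is at most $q^2$ rather than $q^4$, but any finite expansion suffices.
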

\begin{proof}
Draw the $s$-round circuit on the cyclic sheets. The gates that cross a sheet boundary are link operations $E$; expand each in an operator-Schmidt decomposition
\[
 E=\sum_{\alpha=1}^{R}L_\alpha\otimes R_\alpha,
 \qquad R\leq q^2,
\]
where $q$ now stands for the effective port dimension, message registers included, and absorb one factor into each incident sheet, in the order dictated by the sign of the voltage. If $L$ links cross each boundary, the joint boundary index has dimension at most $q^{2sL}$, or a product of the relevant finite dimensions when they vary with time. Inside a sheet, contract all gates and the initial sheet state; what remains, the final physical index $x$ of the sheet and the two boundary indices, is $A_x$. Since the local data are lifted uniformly, this tensor is the same on every sheet and does not depend on $m$. Reconnecting the boundary indices around the cycle gives the trace in~\eqref{eq:mps}, also for $m=1$, where the two ends are reconnected to the same sheet and the original circuit is recovered; in the loop and parallel-link cases distinct half-edges remain distinct cells, so no quantum subsystems get identified.
\end{proof}

The transfer map is the completely positive map
\begin{equation}\label{eq:transfer}
 \mathbb T(Z)=\sum_x A_xZA_x^\dagger,
 \qquad\text{with matrix representation}\qquad \mathbb T=\sum_x A_x\otimes\overline{A_x},
\end{equation}
and its ordinary matrix trace satisfies $\tr(\mathbb T^m)=\langle\Psi_m|\Psi_m\rangle$.
{Here we identify the map with its matrix under the vectorization
\[
 |Z\rangle\!\rangle=\sum_{i,j}Z_{ij}|i\rangle\otimes|j\rangle,
 \qquad
 |\mathbb T(Z)\rangle\!\rangle
 =\left(\sum_x A_x\otimes\overline{A_x}\right)|Z\rangle\!\rangle.
\]
The bar denotes entrywise complex conjugation.}

{The next lemma uses the trace-of-powers argument of Cirac et al.~\cite[Proposition III.1]{cirac2017mpu}, in their matrix-product-unitary setting. We include a self-contained proof for the state tensor of \cref{lem:mps}.}

\begin{lemma}[Transfer spectrum]\label{lem:spectrum}
The only nonzero eigenvalue of $\mathbb T$ is $1$, with algebraic multiplicity one. There are a rank-one spectral projector $P$ and a nilpotent map $N$ such that $PN=NP=0$ and
\[
 \mathbb T^j=P+N^j\qquad(j\geq1).
\]
In particular $\mathbb T^j=P$ for all large $j$, although $\mathbb T$ itself need not have rank one.
\end{lemma}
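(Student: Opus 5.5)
The plan is to combine normalization with the trace-of-powers identity. By \cref{lem:mps}, $\tr(\mathbb T^m)=\langle\Psi_m|\Psi_m\rangle$ for every $m\geq1$. The closed protocol of \cref{lem:purification} consists of unitary node steps, unitary link operations $E$ and a normalized pure product initial state, so $\ket{\Psi_m}$ is a unit vector for every $m$. Hence
\[
 \tr(\mathbb T^m)=\sum_{\lambda}\mu_\lambda\lambda^m=1\qquad(m\geq1),
\]
where $\lambda$ runs over the distinct nonzero eigenvalues of $\mathbb T$ and $\mu_\lambda\geq1$ is the algebraic multiplicity of $\lambda$. Zero eigenvalues contribute nothing to these traces.

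The key step is to read off the spectrum from this power-sum identity. Let $\lambda_1,\dots,\lambda_r$ be the distinct nonzero eigenvalues, and let $\lambda_0=1$, with $\mu_0$ removed from the list if $1$ is already among them. Consider the generating function
\[
 \sum_{m\geq1}\tr(\mathbb T^m)z^m=\sum_\lambda\frac{\mu_\lambda\lambda z}{1-\lambda z}.
\]
This equals $z/(1-z)$ for small $|z|$, so the two rational functions coincide. Distinct poles $1/\lambda$ with nonzero residues cannot cancel. The same conclusion follows by applying Vandermonde invertibility to the first $r$ values of $m$. Therefore the only nonzero eigenvalue is $1$, and its multiplicity $\mu_1$ equals $1$. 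One must check that $1$ actually occurs: if no nonzero eigenvalue existed, the traces would vanish, contradicting $\tr(\mathbb T)=1$.

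Next I will decompose $\mathbb T$ through its Jordan form. Let $P$ be the Riesz spectral projector onto the generalized eigenspace for eigenvalue $1$; this space is one-dimensional, so $P$ has rank one. Set $N=\mathbb T(I-P)$. Then $N$ is the restriction of $\mathbb T$ to the generalized kernel, which is the complementary invariant subspace, so $N$ is nilpotent. Because the eigenvalue $1$ is algebraically simple, its Jordan block is $1\times1$, so $\mathbb TP=P\mathbb T=P$. This gives $\mathbb T=P+N$ with $PN=NP=0$. Expanding binomially, the cross terms vanish, and since $P^j=P$ we get $\mathbb T^j=P+N^j$ for $j\geq1$. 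Once $j$ is at least the nilpotency index, at most $\dim\cB^2$, we have $\mathbb T^j=P$.

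The main obstacle is justifying $\|\Psi_m\|=1$ for every $m$, including $m=1$, where loops and parallel links occur. This is the same uniformity issue already handled in \cref{lem:mps}: the trace formula reproduces the actual circuit for every $m$, with no subsystems identified. The remaining point is that $\mathbb T$ need not have rank one, and this is exactly the nilpotent part $N$. No positivity or irreducibility of $\mathbb T$ is used here, so the argument stays purely spectral.
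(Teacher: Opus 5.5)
Your proposal is correct and follows essentially the paper's proof. Normalization of the closed protocol gives $\tr(\mathbb T^m)=1$ for all $m\geq1$. Linear independence of the sequences $m\mapsto\lambda^m$ then forces the only nonzero eigenvalue to be $1$, with algebraic multiplicity one; you phrase this via poles of the generating function, while the paper uses the Vandermonde argument you mention as an alternative. The Jordan (Riesz) decomposition then supplies $P$ and $N$, and your explicit choice $N=\mathbb T(I-P)$ and the check that $\tr(\mathbb T)=1$ excludes an empty nonzero spectrum only spell out details the paper leaves implicit.
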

\begin{proof}
The circuit is unitary and starts in a normalized state for every $m$, so $\tr(\mathbb T^m)=1$ for all $m\geq1$; this uses only that the local dynamics are trace preserving, not correctness or termination on a task class. If $\lambda_i$ are the distinct nonzero eigenvalues, with algebraic multiplicities $\mu_i$, then $\sum_i\mu_i\lambda_i^m=1$ for all $m$. Add the scalar $1$ to the list of bases if it is not there and move $1^m$ to the left. The sequences $m\mapsto\lambda^m$ for distinct nonzero $\lambda$ are linearly independent (a Vandermonde argument), so every coefficient vanishes, which is the claim about the spectrum. The Jordan decomposition provides $P$ and $N$.
\end{proof}

\subsection{The support theorem}
\begin{theorem}[Support lifting to cyclic covers]\label{thm:supportlift}
Let $(\cN_m)_{m\geq1}$ be a connected cyclic tower with lifted local data. Consider a local protocol uniform under lifting, initialized in a product state across nodes, with finite local resources at every fixed horizon uniformly in the number of sheets. Assume its trace-preserving dynamics are defined on every tower member. Then, for every base node $v$, every finite horizon $s$, and under the natural identification of local systems,
\begin{equation}\label{eq:supportlift}
 \supp\rho_k(v,s)\subseteq\supp\rho_m(v,s)
 \qquad\text{for all }k\geq1\text{ and }m\geq2s+1.
\end{equation}
Equivalently, every local measurement effect with positive probability on $\cN_k$ has positive probability at the corresponding node of every such $\cN_m$. No correctness on any problem class is assumed.
\end{theorem}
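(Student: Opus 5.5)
The plan is to work with closed protocols and at the level of an entire sheet, and only at the end descend to the node $v$. By \cref{lem:purification}, applied to the family $\mathfrak F=\{\cN_m\}_{m\ge1}$ at horizon $s$, there is one closed protocol whose local states recover the original ones through the same local channel $R_{v,s}$ on every member. By~\eqref{eq:support-order}, support inclusion passes through positive maps, and in particular through partial traces. It therefore suffices to prove, for the closed protocol, that the reduced state $\sigma_k$ of one whole sheet of $\cN_k$ has support inside the support of a fixed operator $\sigma_\infty$. Here $\sigma_\infty$ is the reduced sheet state of $\cN_m$ for all large $m$. Restricting to the node above $v$ then gives $\supp\widehat\rho_k(v,s)\subseteq\supp\widehat\rho_m(v,s)$ for all large $m$. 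Since \cref{lem:injrad} says $\rho_m(v,s)=\rho_\infty(v,s)$ for every $m\ge 2s+1$, the threshold $2s+1$ follows. Sheet translation invariance (\cref{lem:injrad}) lets us pick any sheet.

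Using \cref{lem:mps}, the sheet state on $\cN_k$ has matrix elements
\[
\langle x|\sigma_k|x'\rangle=\operatorname{Tr}\bigl((A_x\otimes\overline{A_{x'}})\,\mathbb T^{k-1}\bigr),
\]
and by \cref{lem:spectrum}, $\mathbb T^{k-1}=P$ for all large $k$. Write $P=|R\rangle\!\rangle\langle\!\langle L|$, where $R$ and $L$ are fixed points of $\mathbb T$ and $\mathbb T^\dagger$; these can be chosen positive semidefinite because $\mathbb T$ is completely positive. For a vector $\phi$, set $B_\phi=\sum_x\overline{\phi_x}A_x$. Then
\[
\langle\phi|\sigma_\infty|\phi\rangle=\tr\bigl(L\,B_\phi R B_\phi^\dagger\bigr),
\]
up to the harmless conjugation convention. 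If $L$ and $R$ were positive definite, $\phi\perp\supp\sigma_\infty$ would force $B_\phi=0$. Every finite $k$ would then give $\langle\phi|\sigma_k|\phi\rangle=\sum\bigl|\tr(B_\phi A_{x_2}\cdots A_{x_k})\bigr|^2=0$, which is the inclusion.

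The main obstacle is that $L$ and $R$ need only be semidefinite, and the nilpotent part $N$ contributes at finite $k$. I would handle this with an irreducible-block reduction. Choose a basis of $\cB$ in which all $A_x$ are simultaneously block upper triangular with irreducible diagonal blocks $A_x^{(b)}$. The trace of any product of the $A_x$ equals the sum of traces of the corresponding products of diagonal blocks. Hence~\eqref{eq:mps} holds for every $m\ge1$ with $A_x$ replaced by $\bigoplus_b A^{(b)}_x$, and $\ket{\Psi_m}$ is unchanged. For each block, $\mathbb T_b$ is an irreducible completely positive map, and $\tr(\mathbb T^m)=\sum_b\tr(\mathbb T_b^m)$.

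Apply the argument of \cref{lem:spectrum} to this sum. By Perron--Frobenius for irreducible CP maps, each $\mathbb T_b$ with nonzero spectral radius has a positive eigenvalue equal to that radius. The only nonzero eigenvalue overall is $1$, with multiplicity one. Therefore exactly one block $b_0$ has spectral radius $1$, and every other block has nilpotent transfer map. For such a block $b$, all sufficiently long products of the $A^{(b)}_x$ vanish, because $\mathbb T_b^j(I)=\sum\, (\text{product})(\text{product})^\dagger$. An irreducible family with this property is zero (the nonunital algebra it generates is nilpotent, hence has a common kernel vector). We may therefore discard every block except $b_0$. On $b_0$, the map $\mathbb T_{b_0}$ is irreducible with spectral radius $1$, so its fixed points $R$ and $L$ are positive definite, and the previous paragraph applies verbatim for every $k\ge1$, including $k=1$.

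Checking that the block reduction preserves the exact identity~\eqref{eq:mps} for all $m$, and verifying the Perron--Frobenius and irreducibility claims carefully, is where I expect most of the work to lie.
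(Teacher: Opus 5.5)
Your argument is correct in substance and takes a somewhat different route from the paper. The shared skeleton is the same: one uniform dilation for the whole tower, the common tensor $A_x$, reduction to a single irreducible diagonal family whose transfer map has positive-definite fixed points, and \cref{lem:injrad} for the threshold. The difference is in what gets compared.

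The paper embeds the entire closed network $\cN_k$ as a window of $k$ consecutive sheets in $\cN_M$ with $M\ge k+D$. It shows that the window's reduced state $\tr(lM_xrM_{x'}^\dagger)$ has support equal to the span of all $x\mapsto\bra u M_x\ket w$, notes that $\Psi_k(x)=\tr(M_x)$ lies in this span, and only then traces down to $(v,0)$. You compare one-sheet marginals instead. You identify $\ker\sigma_\infty$ with $\{\phi:\sum_x\overline{\phi_x}A_x=0\}$ and observe that such $\phi$ annihilate the one-sheet marginal of every $\cN_k$ simultaneously, $k=1$ included.

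What each buys: your version is slightly more economical, since a single large member serves all $k$ and no window size depending on $k$ is needed. The paper's version gives a stronger intermediate fact, $\ketbra{\Psi_k}{\Psi_k}\le C\sigma_M$. This says that any record on the whole of $\cN_k$ with positive probability remains possible on a $k$-sheet window of every large cover, a useful counterpoint to \cref{thm:nolift}.

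Two smaller differences. Your proof that a nilpotent irreducible block vanishes identically is stronger than needed; the paper only uses $\|\Psi^{(i)}_m\|^2=\tr(\mathbb T_i^m)=0$. Where you cite Perron--Frobenius, the paper gives a direct kernel-invariance argument for $l,r>0$.

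One step needs repair. Write $\mathbb T_{bc}=\sum_xA^{(b)}_x\otimes\overline{A^{(c)}_x}$. The identity $\tr(\mathbb T^m)=\sum_b\tr(\mathbb T_b^m)$ does not hold as stated. The transfer matrix of $\bigoplus_bA^{(b)}_x$ is $\bigoplus_{b,c}\mathbb T_{bc}$, and the cross blocks contribute $\tr(\mathbb T_{bc}^m)=\langle\Psi^{(c)}_m|\Psi^{(b)}_m\rangle$. These vanish only as a consequence of the conclusion you are deriving, so you cannot run the Vandermonde argument on $\sum_b\tr(\mathbb T_b^m)$.

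The conclusion survives by the paper's route. In the basis indexed by pairs of blocks, ordered lexicographically, $\mathbb T$ is block upper triangular with the $\mathbb T_{bc}$ on the diagonal. So the spectrum of each $\mathbb T_b=\mathbb T_{bb}$ is contained, with multiplicities, in that of $\mathbb T$. By \cref{lem:spectrum}, at most one $\mathbb T_b$ is non-nilpotent. At least one is, because a nilpotent block gives $\Psi^{(b)}_m=0$ while $\|\Psi_m\|=1$. With this fix, the Perron--Frobenius remark is not needed for this step.
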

\begin{proof}
Apply \cref{lem:purification} to the family $\{\cN_j\to\cN:j\ge1\}$: one closed protocol for the whole tower, and by~\eqref{eq:support-order} and the common recovery channel it suffices to prove the inclusion for that closed protocol, which we assume from now on. Fix $k$ and let $\rho_\infty$ denote the common local state of \cref{lem:injrad}.

{We compare the small closed network with a region inside a much larger network. After an irreducible-block reduction, the exterior supplies positive-definite boundary operators, yielding the support inclusion. Locality then gives the explicit threshold.}

\emph{1. Reduction to an irreducible block.}
Choose a maximal flag of subspaces of $\cB$ invariant under all $A_x$. In a basis adapted to it, the matrices are block upper triangular and the diagonal families $A_x^{(i)}$ are irreducible. Products remain block upper triangular, so
\[
 \Psi_m=\sum_i\Psi_m^{(i)},\qquad
 \Psi_m^{(i)}(x_1,\ldots,x_m)=\tr(A_{x_1}^{(i)}\cdots A_{x_m}^{(i)}).
\]
In the basis indexed by pairs of blocks, ordered lexicographically, the transfer matrix is block upper triangular with diagonal blocks $\sum_xA_x^{(i)}\otimes\overline{A_x^{(j)}}$. Every eigenvalue of every such block is therefore an eigenvalue of $\mathbb T$, counted with algebraic multiplicity. Let $\mathbb T_i$ denote the block with $i=j$. If $\mathbb T_i$ is nilpotent, then
\[
 \|\Psi_m^{(i)}\|^2=\tr(\mathbb T_i^m)=0\qquad(m\geq1).
\]
Not all $\mathbb T_i$ can be nilpotent, because $\Psi_m$ is normalized, and by \cref{lem:spectrum} exactly one of them carries the simple eigenvalue $1$, the others being nilpotent. So exactly one irreducible diagonal family contributes to every $\Psi_m$; we replace $A_x$ and $\mathbb T$ by this family and its transfer map, and~\eqref{eq:mps} remains valid for every $m$.

\emph{2. The boundary operators are positive definite.}
By \cref{lem:spectrum} there is a positive integer $D$ with $\mathbb T^j=P$ for all $j\geq D$, where $P$ has rank one as a linear map on matrices and, being a power of $\mathbb T$, is completely positive. Choose $X\geq0$ with $r=P(X)\ne0$, which is possible because the positive matrices span the matrix space and $P\ne0$. Since the range of $P$ is one-dimensional and $P$ is positive, $P$ is $r$ times a positive functional, represented by some $l\geq0$:
\begin{equation}\label{eq:caps}
 P(Z)=r\tr(lZ),\qquad \tr(lr)=1.
\end{equation}
The normalization comes from $P^2=P$, and $\mathbb TP=P\mathbb T=P$ gives $\mathbb T(r)=r$ and $\mathbb T^*(l)=l$.

Now let $z\in\ker r$. Each term of $r=\sum_xA_xrA_x^\dagger$ is positive, so $r^{1/2}A_x^\dagger z=0$ for every $x$: $\supp r$ is invariant under all $A_x$, and irreducibility with $r\ne0$ forces $r>0$. The adjoint family is irreducible too, since a subspace invariant under all $A_x^\dagger$ has an orthogonal complement invariant under all $A_x$, and the same argument applied to $l=\sum_xA_x^\dagger lA_x$ gives $l>0$.

\emph{3. Full-rank boundaries contain every closed-chain amplitude.}
{As illustrated in \cref{fig:mps}}, for a configuration $x=(x_1,\ldots,x_k)$ of $k$ sheets put $M_x=A_{x_1}\cdots A_{x_k}$, so that the closed-chain state is $\Psi_k(x)=\tr(M_x)$. Choose $M\geq k+D$ and reduce $\ket{\Psi_M}$ to $k$ consecutive sheets, sheet zero among them. Let $y=(y_1,\ldots,y_{M-k})$ denote a configuration of the
remaining sheets, and put $M_y=A_{y_1}\cdots A_{y_{M-k}}$.
Taking the partial trace over these sheets gives
\[
  \sigma_M(x,x')
  =
  \sum_y
  \operatorname{tr}(M_x M_y)\,
  \overline{\operatorname{tr}(M_{x'} M_y)}.
\]
The sum over exterior configurations is represented by the
transfer matrix $\mathbb T^{M-k}$. Since $M-k\geq D$, this
power equals $P$. Using $P(Z)=r\,\operatorname{tr}(lZ)$,
the reduced density matrix therefore has entries
\begin{equation}\label{eq:gram}
 \sigma_M(x,x')=\tr\bigl(lM_xrM_{x'}^\dagger\bigr).
\end{equation}
Factor $r=\sum_a\ketbra{r_a}{r_a}$ and $l=\sum_b\ketbra{l_b}{l_b}$. Equation~\eqref{eq:gram} becomes $\sigma_M=\sum_{a,b}\ketbra{\phi_{ab}}{\phi_{ab}}$, where
\[
 \phi_{ab}(x)=\bra{l_b}M_x\ket{r_a}.
\]
Since $l$ and $r$ have full rank, both families of factor vectors span $\cB$, and the support of $\sigma_M$ is the span of all vectors $x\mapsto\bra uM_x\ket w$. It contains in particular $x\mapsto\tr(M_x)=\sum_i\bra{e_i}M_x\ket{e_i}$ for an orthonormal basis $(e_i)$, that is,
\begin{equation}\label{eq:block-support}
 \supp\ketbra{\Psi_k}{\Psi_k}\subseteq\supp\sigma_M.
\end{equation}

\emph{4. Reduce to one node and use the stabilized local state.}
By~\eqref{eq:support-order}, $\ketbra{\Psi_k}{\Psi_k}\leq C\sigma_M$ for some finite $C$, and the inequality survives the partial trace over every node except $(v,0)$. Taking $M\geq2s+1$ as well, this reads $\rho_k(v,s)\leq C\rho_\infty(v,s)$, and \cref{lem:injrad} gives $\rho_\infty(v,s)=\rho_m(v,s)$ for every $m\geq2s+1$. This is~\eqref{eq:supportlift}; no bound on the nilpotence index $D$ in terms of $s$ was needed. Finally, if an effect $F\geq0$ had $\tr(F\rho_m)=0$, the inequality would force $\tr(F\rho_k)=0$ too.
\end{proof}

\input{stacs/figures/mps}

\subsection{An exact obstruction to a smaller threshold}\label{sec:sharpness}
{The following example shows that the bound $m\geq2s+1$ cannot be weakened uniformly to $m\geq2s$.} Take the base with one node and one loop, so that $\cN_m=C_m$, give each node two port qubits $L,R$ initially in $\ket{00}$, attach them coherently as $R_j\leftrightarrow L_{j+1}$, use the swap link layer and the node unitaries
\begin{equation}\label{eq:witness-gates}
 U_1=\iSWAP(H\otimes H),\qquad U_2=\CZ(H\otimes H),
\end{equation}
and $\iSWAP\ket{01}=i\ket{10}$, $\iSWAP\ket{10}=i\ket{01}$, with $\ket{00},\ket{11}$ fixed. Here \(H\) is the standard Hadamard gate and \(\mathrm{CZ}=\operatorname{diag}(1,1,1,-1)\).

\begin{proposition}[Two-round sharpness witness]\label{prop:sharpness}
Let $\ket{a_\pm}=(\ket{00}\pm i\ket{11})/\sqrt2$. After the two rounds in~\eqref{eq:witness-gates},
\begin{equation}\label{eq:sharpness}
 \rho_{C_2}=\ketbra{a_+}{a_+},\qquad
 \rho_{C_4}=\ketbra{a_-}{a_-},\qquad
 \rho_{C_3}=\rho_{C_m}=\tfrac12\bigl(\ketbra{00}{00}+\ketbra{11}{11}\bigr)
 \quad(m\geq5).
\end{equation}
In particular $\supp\rho_{C_2}\not\subseteq\supp\rho_{C_4}$, although $C_4$ is a cover of $C_2$ and $4=2s$.
\end{proposition}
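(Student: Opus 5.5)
The plan is to exploit that every gate involved ($H$, $\CZ$, $\iSWAP$, $\SWAP$) is a Clifford gate. The global pure state on $C_m$ after two rounds is then a stabilizer state, and its reduced state on the two port qubits of one node is fixed by the subgroup of stabilizers supported on that node. The three target states are distinguished by such stabilizers. All of them are stabilized by $Z\otimes Z$. The state $\ket{a_\pm}$ is in addition stabilized by $\pm X\otimes Y$: one checks $X\otimes Y\ket{00}=i\ket{11}$ and $X\otimes Y\ket{11}=-i\ket{00}$, hence $X\otimes Y\ket{a_\pm}=\pm\ket{a_\pm}$. The mixed state $\tfrac12(\ketbra{00}{00}+\ketbra{11}{11})$ is the one whose local stabilizer group is generated by $Z\otimes Z$ alone. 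So it suffices to determine, for each $m$, which two-qubit Pauli operators on $\{L_0,R_0\}$, with which signs, belong to the global stabilizer group.

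\emph{Step 1.} Start from the generators $Z_{L_j}$ and $Z_{R_j}$ of $\ket{0}^{\otimes 2m}$. Push them through the round structure of \cref{def:algorithm}: node step $U_1$, link swap $R_j\leftrightarrow L_{j+1}$, node step $U_2$, link swap. This is done with the standard conjugation tables: $H$ exchanges $X$ and $Z$, $\CZ$ maps $X\otimes I\mapsto X\otimes Z$, and $\iSWAP$ maps $X\otimes I\mapsto -\,Z\otimes Y$ and $Z\otimes I\mapsto I\otimes Z$, with the signs and phases to be tracked carefully. By translation invariance (\cref{lem:injrad}) one obtains two generator types $g_j,h_j$, each supported on a window of at most three consecutive nodes, i.e.\ within the light cone of \cref{thm:cone}.

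\emph{Step 2.} Determine the local stabilizers at node $0$. A product $\prod_j g_j^{\epsilon_j}h_j^{\eta_j}$ is supported on node $0$ only if its tails cancel. On the infinite path, and on $C_m$ whenever the windows cannot meet around the cycle (which covers $m\ge5$), I expect the only surviving combination to be the one giving $Z_{L_0}Z_{R_0}$. That yields the mixed state, in accordance with $\rho_\infty$ from \cref{lem:injrad}. On $C_m$ the cycle supplies one extra relation: the product of all generators of one type along the ring. Its tails telescope, and what remains sits on a single node carrying a Pauli phase accumulated once per node, a factor like $(\pm i)^m$, just as the chain contraction $\tr(D^n)$ in \cref{thm:nolift}. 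I expect this product to reduce to $\pm X_{L_0}Y_{R_0}$ (times $Z\otimes Z$) exactly when $m$ is even and small enough that the wrapped windows overlap on one node. The phase $(-1)^{m/2}$ then gives $+$ for $m=2$ and $-$ for $m=4$. For $m=3$, the cycle product is either not Hermitian-compatible with a local Pauli (odd power of $i$), or not supported on one node, so only $Z\otimes Z$ survives.

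\emph{Step 3.} Conclude. Two independent local stabilizers on two qubits fix a pure state, which gives $\rho_{C_2}=\ketbra{a_+}{a_+}$ and $\rho_{C_4}=\ketbra{a_-}{a_-}$. These are orthogonal, so $\supp\rho_{C_2}\not\subseteq\supp\rho_{C_4}$. The $m=3$ and $m\ge5$ cases follow from the single stabilizer $Z\otimes Z$, using \cref{lem:injrad} to treat all $m\ge5$ at once.

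As a cross-check I would redo $C_2$ and $C_4$ by the amplitude contraction of \cref{lem:mps}, with one node per sheet and $\SWAP$ expanded in its operator-Schmidt form. The dependence on $m$ then appears as a trace of the $m$-th power of a diagonal phase matrix.

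The main obstacle is Step 2, and in particular the exact bookkeeping of signs and factors of $i$ from $\iSWAP$ and from the orientation of the attachment $R_j\leftrightarrow L_{j+1}$. The whole distinction between $C_2$ and $C_4$ lives in one global phase $(\pm i)^m$, so a single sign error swaps $a_+$ and $a_-$. A second risk is the order of node and link layers inside a round. If round one produced only a translation-invariant product state, no $m$-dependence could arise, so I would first verify on $C_2$, by brute force on four qubits, that the convention of \cref{def:algorithm} indeed builds entanglement across links before the second node step.
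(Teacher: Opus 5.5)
\textbf{Verdict: genuine gap.} Your framework is the one the paper uses in Appendix~A: a Clifford circuit, with the local state equal to $\tfrac14\sum_{P\in\mathcal S_j}P$ over the stabilizers supported at node $j$. But the proposal never performs the computation that \emph{is} the proof. The mechanism you predict for Step~2 is not what occurs, and followed literally it fails.

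\textbf{What Step~1 actually gives.} For $n\geq3$ the generators are $g_j=-Z_{R_{j-2}}X_{L_j}Y_{R_j}Z_{L_{j+2}}$ and $h_j=-Z_{R_{j-2}}Y_{L_j}X_{R_j}Z_{L_{j+2}}$, with $g_jh_j=Z_{L_j}Z_{R_j}$. Their tails sit at nodes $j\pm2$, as the radius-$2$ light cone allows, not inside three consecutive nodes.

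\textbf{Why your Step~2 mechanism fails, and what replaces it.} The product of all $g_j$ around the ring does not telescope to one node: each $g_\ell$ keeps its $X/Y$ part at node $\ell$, so that product is supported everywhere. On $C_4$ the extra local stabilizer comes from a \emph{single} generator. Its two tails land on the same antipodal node $j+2\equiv j-2$, where $Z_{R_{j+2}}Z_{L_{j+2}}=g_{j+2}h_{j+2}$ removes them and leaves $-X_{L_j}Y_{R_j}$. The sign is the explicit $-1$ in $g_j$, not an accumulated phase $(\pm i)^m$.

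\textbf{The cases $n=3$ and $n\geq5$.} The reason only $Z\otimes Z$ survives is not an odd power of $i$. The $X$-components at each node $\ell\neq j$ force $g_\ell,h_\ell$ to appear both or neither, contributing only $Z_{L_\ell}Z_{R_\ell}$. A lone $g_j$ or $h_j$ then leaves single $Z$'s at two distinct nodes $j\pm2$, and no product of local $ZZ$'s cancels them. Using \cref{lem:injrad} to treat all $m\geq5$ at once is fine.

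\textbf{The case $C_2$.} It needs its own computation, since the generator formula above presupposes $j-1\neq j+1$. On $C_2$ each link layer moves a node's entire pair to the other node, so the global state stays a product across the two nodes. Your proposed sanity check (entanglement across links after round one) would therefore come out negative on $C_2$; that does not signal a wrong convention. Direct conjugation gives $+X_LY_R$ and $+Y_LX_R$ at each node.

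\textbf{Sign error in your conjugation table.} The entry $X\otimes I\mapsto-Z\otimes Y$ is the Heisenberg-picture map $\iSWAP^\dagger(X\otimes I)\iSWAP$. To evolve the stabilizers of the state you need $\iSWAP(X\otimes I)\iSWAP^\dagger=+Z\otimes Y$, which follows for instance from $\iSWAP=(S\otimes S)\CZ\SWAP$. Since $H$, $\CZ$, $\SWAP$ and the initial state are real, your sign amounts to running the circuit with $\iSWAP^\dagger=\overline{\iSWAP}$. That complex-conjugates the final state and yields $\ket{a_-}$ on $C_2$ and $\ket{a_+}$ on $C_4$, exactly the swap you feared. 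The non-inclusion would survive, but the equalities in~\eqref{eq:sharpness} would not.
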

Appendix A gives an exact stabilizer proof. This example, though, says nothing about optimality at each horizon separately, but shows that local support need not be nested between small covers.

%% file: stacs/figures/mps.tex
\begin{figure}[t]
\centering
\begin{tikzpicture}[every node/.style={font=\small},
 box/.style={draw,thick,minimum width=8mm,minimum height=7mm,fill=black!8},
 boundarybox/.style={draw,thick,minimum width=6mm,minimum height=7mm,fill=black!18}]
 \foreach \i in {0,1,2} {
  \node[box] (a\i) at (1.2*\i,0) {$A$};
  \draw (a\i.north)--++(0,.4) node[above] {$x_{\i}$};
 }
 \draw[thick] (a0)--(a1)--(a2);
 \draw[thick] (a2.east)--++(.4,0)--++(0,-.7)--(-.8,-.7)--(-.8,0)--(a0.west);
 \node at (.9,-1.1) {Closed $k$-sheet network: $\tr(M_x)$};
 \begin{scope}[shift={(7,0)}]
  \node[boundarybox] (l) at (-1.1,0) {$l$};
  \foreach \i in {0,1,2} {
   \node[box] (b\i) at (1.2*\i,0) {$A$};
   \draw (b\i.north)--++(0,.4) node[above] {$x_{\i}$};
  }
  \node[boundarybox] (r) at (3.5,0) {$r$};
  \draw[thick] (l)--(b0)--(b1)--(b2)--(r);
  \node at (1.2,-1.1) {$k$ consecutive sheets in a larger network: $\tr(lM_xrM_{x'}^\dagger)$};
 \end{scope}
\end{tikzpicture}
\caption{The same sheet tensor generates every cyclic cover. Closing $k$ sheets gives $\Psi_k(x)=\tr(M_x)$. In a sufficiently large cover, contracting the exterior in the doubled tensor network gives positive-definite boundary operators $l,r$. The right drawing indicates these boundaries schematically; the density matrix also contains the conjugate physical indices $x'$. Its support contains the closed-chain vector.}
\label{fig:mps}
\end{figure}

%% file: stacs/sections/05-universal.tex
\section{Universal covers and high-girth finite covers}\label{sec:universal}

The cyclic theorem reaches beyond cyclic towers once one passes through suitable finite covers. Let $U\to\cN$ be the universal covering tree, with all port, processor and input data lifted, and for a node $\tilde v$ above $v$ define $\rho_U(\tilde v,s)$ by the finite causal-cone prescription of \cref{sec:locality}. Girth counts a loop as a cycle of length one and a pair of parallel edges as a cycle of length two; a tree has infinite girth.

\begin{lemma}[Increasing girth by cyclic factors]\label{lem:homology}
Given a finite connected port graph $G$, an integer $p\geq2$ and a finite $h\geq0$, there is a finite connected cover $H\to G$ of girth greater than $h$ that is a composition of cyclic $p$-sheet covers. Every factor uses voltages in $\{-1,0,1\}$ and belongs to a connected cyclic tower.
\end{lemma}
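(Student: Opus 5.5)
The plan is to build $H$ by iterating a single operation: take a finite connected port graph $G_i$, pick one shortest cycle $\gamma$, and pass to a cyclic $p$-sheet cover $G_{i+1}\to G_i$ that kills every short cycle lying above $\gamma$ while creating no new short cycles. The natural choice is a homology (voltage) cover along a \emph{non-bridge link}. A cycle of length at most $h$ contains a link that is not a bridge, since every link on a cycle lies on that cycle. Choose a link $e=\{a,\iota(a)\}$ of $\gamma$ and set $c(a)=1$, $c(\iota(a))=-1$, with voltage $0$ on all other ports. This is exactly the construction after \cref{def:cycliccover}, so $G_{i+1}=(G_i)_p$ is connected. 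Indeed, every member $(G_i)_m$ is connected, because the other arc of $\gamma$ joins the endpoints of $e$ inside a sheet, and hence the factor belongs to a connected cyclic tower as required.

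To get a good potential function, I would not use a single link but a spanning tree. Fix a spanning tree $T$ of $G_i$. Each non-tree link $e$ has a fundamental cycle, and the net number of traversals of $e$ by a closed walk is its coordinate in $H_1(G_i;\Z)$. A cycle lifts to a closed cycle in the cover given by voltage $1$ on $e$ exactly when its net traversal number of $e$ is divisible by $p$. A closed walk above a simple cycle $\gamma$ through $e$ then has length at least $p\,|\gamma|\ge 2|\gamma|$, so $\gamma$ does not lift as a cycle. The potential to track is the number of \emph{cycles} (simple closed non-backtracking walks, up to rotation and reversal) of length at most $h$.

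Every cycle of $G_{i+1}$ projects to a closed non-backtracking walk of $G_i$ of the same length. Short cycles of $G_{i+1}$ therefore project onto short closed non-backtracking walks of $G_i$ whose voltage sum is $\equiv 0 \pmod p$. The cleanest argument I expect to work is a finite-stage homology cover. Enumerate the finitely many non-tree links $e_1,\dots,e_\beta$ of the \emph{original} $G$ and the target girth $h$. Realise the $\Z_{p^N}^{\beta}$-homology cover of $G$, for $p^N>h$, as a composition of cyclic $p$-covers: at each stage lift one voltage coordinate, with voltage $\pm1$ on a single lifted link. Each factor then carries voltage in $\{-1,0,1\}$ supported on one link, and that link lies on a cycle, so the factor is connected.

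It remains to check that the composite has girth $>h$. A closed non-backtracking walk of length $\le h$ in $G$ lifts to a closed walk exactly when its homology class is $0$ modulo $p^N$. Since its coefficients have absolute value $\le h<p^N$, it must actually be null-homologous. A null-homologous closed non-backtracking walk in a graph is never the projection of a cycle of length $\ge1$ in a cover unless it is trivial. Its lift would be a closed non-backtracking walk in the universal tree's image, and null-homology plus tree structure forces backtracking; this is a standard fact, though I should double-check it, since homologically trivial but nontrivial-in-$\pi_1$ walks such as commutators do exist. That is the main obstacle. A commutator walk $aba^{-1}b^{-1}$ is null-homologous but not backtracking, so abelian covers alone do not raise girth. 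The fix is to iterate: after reaching $G'$, take fresh homology covers of $G'$ itself, with its new spanning tree. A short walk surviving in $G'$ projects to a walk in the commutator subgroup. Each iteration pushes short walks one step deeper in the derived series of the free group $\pi_1(G)$, and since free groups are residually solvable (the intersection of the derived series is trivial), after finitely many rounds no nontrivial element of length $\le h$ survives. Each round is a finite composition of cyclic $p$-covers with voltages in $\{-1,0,1\}$, giving the lemma.
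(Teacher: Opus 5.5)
\textbf{The gap.} Your final construction has the same skeleton as the paper's proof: iterated homology covers of the current graph, each realised as a chain of cyclic $p$-sheet covers along chords. But the step that makes your iteration terminate is asserted rather than proved, and the obvious justification for it is false.

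Round one is fine. A closed walk of length at most $h<p^N$ that lifts closed to $G'$ has chord coefficients of absolute value at most $h$, so it lies in $F^{(1)}$, where $F=\pi_1(G)$.

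In round two you only learn that a short surviving walk lies in $[\pi_1G',\pi_1G']$. Since $\pi_1G'=\ker\bigl(F\to(\Z/p^N)^\beta\bigr)$ is strictly larger than $F^{(1)}$, this subgroup is not contained in $F^{(2)}$. For the bouquet of two loops, $x^{p^N},y^{p^N}\in\pi_1G'$, but $[x^{p^N},y^{p^N}]$ has nontrivial image in the integer Heisenberg group, which is metabelian.

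So ``one step deeper in the derived series'' needs shortness again at every level. A suitable induction hypothesis is: for every connected subgraph $S$ of the $k$th cover with at most $h$ edges, the image of $\pi_1(S)$ in $F$ lies in $F^{(k)}$. This does hold. Each fundamental cycle of $S$ has length at most $h<p^N$, so its projection is null-homologous in the previous cover, hence in the image subgraph $S'$. The image of $\pi_1(S)$ therefore lands in $[\pi_1S',\pi_1S']$. Together with residual solvability of free groups and the finiteness of short closed walks in $G$, this would complete your argument, but none of it is in the proposal.

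\textbf{The paper's route.} The paper avoids all of this. It iterates the $\Z_p^r$ homology cover with exponent $p$ and shows that each round strictly increases the girth:
\begin{itemize}
\item Covers never decrease girth.
\item A cycle of the old girth $g$ upstairs would project to a non-backtracking closed walk of length $g$ with no repeated vertex (otherwise $G$ would have a shorter cycle), i.e.\ to a simple cycle.
\item That cycle uses some chord exactly once, so one voltage coordinate is $\pm1\not\equiv0\pmod p$, and it cannot lift closed.
\end{itemize}
This argument applies verbatim to your $\Z_{p^N}^\beta$ covers. You can therefore drop both the condition $p^N>h$ and the derived series.

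\textbf{Two smaller corrections.}
\begin{itemize}
\item The claim that abelian covers ``do not raise girth'' is false for the homology covers you use. They always raise it for a non-tree, by the argument just given. What fails is raising it past the length of a reduced commutator walk in a single step.
\item ``Voltage $\pm1$ on a single lifted link'' is right only for the first coordinate. Later factors need voltage $1$ on every lift of the chord that crosses from the last to the first sheet of that coordinate, one for each value of the earlier coordinates. The tower is then connected because the iterated lift of the chord's fundamental cycle is a closed walk of total voltage $1$, not because a single link is a non-bridge.
\end{itemize}
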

\begin{proof}
If $G$ is a tree, take $H=G$, the empty composition. Otherwise choose a spanning tree and orient the $r=|E(G)|-|V(G)|+1$ edges outside it, the chords. Give the tree edges voltage zero in $\Z_p^r$ and the $i$th chord voltage $e_i$, the $i$th standard basis vector, with opposite ports assigned opposite voltages. The resulting cover is connected, because the fundamental cycle of the $i$th chord has voltage $e_i$ and the voltages of closed walks therefore generate the deck group $\Z_p^r$. 

This finite homology cover factors into $r$ successive cyclic $p$-sheet covers, one coordinate at a time. At factor $i$, every lift of the oriented $i$th chord has voltage $1$, its reverse has voltage $-1$, and every other link has voltage zero. The fundamental cycle of that chord uses only the chord and tree edges, so it lifts closed through the previous coordinates and has voltage $1$ at this factor, and the cyclic cover of this factor is connected at \emph{every} sheet count, not only at $p$. 

Let $g$ be the girth before the construction. A simple cycle in a cover projects to a non-backtracking closed walk, which contains a cycle of no greater length, so covering never decreases the girth. If the homology cover had a cycle of length $g$, its projection would be a shortest simple cycle of $G$, since a repeated vertex in the projection would give a shorter cycle. That cycle uses some chord exactly once, so its voltage has a coordinate $\pm1$ modulo $p$ and it cannot lift closed. The new girth is therefore strictly larger than $g$. 

Repeat the construction on the new graph as long as its girth is at most $h$. The girth grows by at least one each time, so finitely many repetitions suffice, and every factor is a finite connected cyclic $p$-sheet cover with the required voltages.
\end{proof}

\begin{theorem}[Support inclusion into the universal cover]\label{thm:universal}
Let $\cN$ be a finite connected port graph and consider a lift-uniform, product-prepared protocol that satisfies the finite-horizon resource condition and is defined on all finite covers of $\cN$. Then for every node $v$ and every finite horizon $s$,
\begin{equation}\label{eq:universal}
 \supp\rho_{\cN}(v,s)\subseteq\supp\rho_U(\tilde v,s).
\end{equation}
Consequently, for every finite connected covering $H\to\cN$ with $\girth(H)>2s$ and every node $w$ above $v$,
\begin{equation}\label{eq:highgirth}
 \rho_H(w,s)=\rho_U(\tilde v,s),\qquad
 \supp\rho_{\cN}(v,s)\subseteq\supp\rho_H(w,s).
\end{equation}
The local dynamics must be defined on the covers used; no task-class membership, correctness or termination on them is required.
\end{theorem}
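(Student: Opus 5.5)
Build, by \cref{lem:homology} with $p=2s+1$ (any $p\ge 2s+1$ would do) and $h=2s$, a finite connected cover $H\to\cN$ of girth greater than $2s$. It comes as a composition of cyclic $p$-sheet covers
\[
 \cN=G_0\leftarrow G_1\leftarrow\cdots\leftarrow G_t=H .
\]
Each factor $G_i\to G_{i-1}$ is the $p$-sheet member of a connected cyclic tower over $G_{i-1}$ with voltages in $\{-1,0,1\}$. The protocol is defined on all finite covers of $\cN$, so it is defined on every member of every such tower. Lifts of lifted data are lifted data, and the product preparation and finite-horizon resource condition carry over. \cref{thm:supportlift} therefore applies to each tower with $k=1$ and $m=p\ge 2s+1$. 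It gives $\supp\rho_{G_{i-1}}(w_{i-1},s)\subseteq\supp\rho_{G_i}(w_i,s)$ for any node $w_i$ of $G_i$ above $w_{i-1}$; by \cref{lem:injrad} the choice of sheet is irrelevant. Chaining these inclusions along a compatible sequence of nodes gives $\supp\rho_\cN(v,s)\subseteq\supp\rho_H(w,s)$ for some $w$ above $v$.

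Next we identify $\rho_H(w,s)$ with $\rho_U(\tilde v,s)$. The universal covering tree factors through $H$: there is a covering $U\to H$ sending $\tilde v$ to $w$ after choosing lifts compatibly. Since $\girth(H)>2s$, the ball $B(w,s)$ in $H$ contains no cycle, because a cycle inside a ball of radius $s$ has length at most $2s+1$ in general. This needs care. I would argue instead that two distinct lifts in $U$ of the same node of $H$ that lie in $B(\tilde v,s)$ are joined by a non-backtracking path of length at most $2s$ in the tree. That path projects to a closed non-backtracking walk in $H$ of length at most $2s$, which contains a cycle of length at most $2s$, a contradiction. So $U\to H$ is injective on $B(\tilde v,s)$. \cref{prop:locallift}, in its version for locally finite infinite covers noted after its proof, then gives $\rho_H(w,s)=\rho_U(\tilde v,s)$. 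The same argument applies to any finite connected cover $H$ of girth greater than $2s$ and any $w$ over $v$. This yields the equality in \eqref{eq:highgirth}, and combining it with the first paragraph yields \eqref{eq:universal}. The inclusion in \eqref{eq:highgirth} then follows for every such $H$ and $w$ by substituting the equality into \eqref{eq:universal}. Along the way we note that $\rho_U(\tilde v,s)$ is independent of the choice of $\tilde v$ over $v$, by deck transformations of $U$.

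I expect the main obstacle to be bookkeeping in the first step. We must check that at each factor, the hypotheses of \cref{thm:supportlift} hold for the tower over $G_{i-1}$ treated as a new base instance: connectedness at every sheet count, which \cref{lem:homology} guarantees, and uniform finite-resource bounds across that tower. The resource bound should follow because every tower member is a finite cover of $\cN$, and node types and degrees are those of $\cN$. A second point to check is that ``natural identification of local systems'' is consistent along the composition of coverings. This holds because each covering restricts to bijections on the ports at each node, so the attachments and cells correspond canonically.
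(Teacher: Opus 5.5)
Your proposal is correct and follows the paper's proof essentially step for step. You use a composition of cyclic $p$-sheet covers from \cref{lem:homology}, apply \cref{thm:supportlift} with $k=1$ and $m=p$ at each factor, show $U\to H$ is injective on $B(\tilde v,s)$ via a closed non-backtracking walk of length at most $2s$, and finish with \cref{prop:locallift}. The only slip is that $p=2s+1$ gives $p=1$ when $s=0$, which is outside the range $p\ge2$ of \cref{lem:homology}; the paper takes $p\ge\max(2,2s+1)$, and the case $s=0$ is trivial anyway because all local states are then the initial product factors.
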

\begin{proof}
Choose an integer $p\geq\max(2,2s+1)$. By \cref{lem:homology}, there is a finite sequence
\[
 G_0=\cN\longleftarrow G_1\longleftarrow\cdots\longleftarrow G_t=H_0
\]
of cyclic $p$-sheet covers ending at a graph of girth greater than $2s$. Lift all instance data along the sequence and choose corresponding nodes. At each factor the copied local rules define trace-preserving dynamics on the whole connected cyclic tower, not only on its $p$-sheet member, and the finite-horizon resource condition holds because every node type and every local rule comes from the fixed base. So \cref{thm:supportlift} applies with $k=1$ and $m=p$ at every factor, with no correctness assumption on the intermediate graphs, and gives
\[
 \supp\rho_{\cN}(v,s)\subseteq\supp\rho_{G_1}(v_1,s)
 \subseteq\cdots\subseteq\supp\rho_{H_0}(w_0,s).
\]
The map from the universal tree to $H_0$ is injective on the ball of radius $s$: two distinct vertices of that ball with the same image would give a non-backtracking closed walk of length at most $2s$, hence a cycle of length at most $2s$ in $H_0$, against its girth. By \cref{prop:locallift} and its finite-cone extension, $\rho_{H_0}(w_0,s)=\rho_U(\tilde v,s)$, which proves~\eqref{eq:universal}. The same injectivity argument applies to any finite cover $H$ of girth greater than $2s$, whose universal cover, with the lifted data and the corresponding root, is the same $U$; this proves~\eqref{eq:highgirth}.
\end{proof}

The injectivity radius that matters in~\eqref{eq:highgirth} is that of $U\to H$, not that of $H\to\cN$: when $H\to\cN$ is itself injective on a ball of radius $s$, \cref{prop:locallift} already gives equality with the base, and the inclusion is of interest precisely when it is not. It does not extend to every cover of small girth, as the witness $C_4\to C_2$ shows.

%% file: stacs/sections/06-consequences.tex
\section{Consequences for exact computation}\label{sec:consequences}

In $\Counting$ every node must output the total number of nodes; all nodes have the same input, which contains no size promise. In $\Election$ the outputs are $\mathsf{leader}$ and $\mathsf{follower}$, with exactly one leader once all outputs have been produced. Unlike the support theorems, the consequences below depend on task-class membership: the instance where an output is produced and the lifted instance where it contradicts correctness must both be admissible.

\begin{theorem}[Zero-error information on a cyclic tower]\label{thm:nocount}
Let $(\cN_m)_{m\geq1}$ be a $\cC$-admissible cyclic tower with lifted input data, and let a protocol satisfying the hypotheses of \cref{thm:supportlift} be zero-error for a function $f$ on $\cC$. If some node of $\cN_k$ outputs $f(\cN_k)$ with positive probability, then
\begin{equation}\label{eq:eventual}
 f(\cN_m)=f(\cN_k)\quad\text{for all sufficiently large }m.
\end{equation}
In particular, a zero-error counting algorithm produces no output at all, with probability one, on every member of the tower, and the same holds for a zero-error algorithm that outputs a finite upper bound on the number of nodes.
\end{theorem}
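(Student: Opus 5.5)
The plan is to turn the event ``a node outputs $f(\cN_k)$ by time $s$'' into a local measurement effect and then apply \cref{thm:supportlift}. The output register belongs to the classical state of the node, and once written it never changes. So suppose some node above $v$ in $\cN_k$ outputs the value $y=f(\cN_k)$ with positive probability. By countable additivity over rounds, there is a finite horizon $s$ at which, with positive probability, the output register of that node already holds $y$. By \cref{lem:injrad} every sheet above $v$ carries the same local state, so we may take the node to be $(v,0)$. Let $F$ be the projector onto the classical states of $(v,0)$ whose output register equals $y$; it is a local effect with $\tr(F\rho_k(v,s))>0$.

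Next, apply \cref{thm:supportlift}. For every $m\geq2s+1$ it gives $\tr(F\rho_m(v,s))>0$: on $\cN_m$ the node $(v,0)$ outputs $y$ with positive probability. The tower is $\cC$-admissible, so $\cN_m$ is an admissible instance, and the protocol is zero-error there. A wrong output cannot occur with positive probability, so $y=f(\cN_m)$ for all $m\geq 2s+1$. This is \eqref{eq:eventual}, with the explicit threshold $2s+1$. The point to check is that the output register genuinely lies in the local system on which supports are compared. It does, because the recovery channel of \cref{lem:purification} keeps the active classical registers, possibly dephased, and the inclusion survives that channel by \eqref{eq:support-order}.

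For counting, $f(\cN_m)=m|V|$ is strictly increasing in $m$, so it is never eventually constant. Hence, for every $k$, no node of $\cN_k$ outputs with positive probability at any finite time. Summing over the countably many rounds, nodes, and values, the probability that any output is ever produced on $\cN_k$ is zero.

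For a finite upper bound the task is a relation rather than a function, so the argument needs a small adaptation. Suppose a node of $\cN_k$ outputs a bound $b$ with positive probability. The same effect argument shows that $b$ is output with positive probability on every $\cN_m$ with $m\geq2s+1$. Zero error then forces $b\geq m|V|$ for all such $m$, which is impossible. So again, with probability one, no output is produced on any tower member.

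The step I expect to need the most care is the reduction from ``outputs at some time'' to ``outputs by a fixed horizon $s$.'' This reduction is needed because \cref{thm:supportlift} is a fixed-horizon statement with the threshold depending on $s$. It is handled by the countable union over $s$, together with the irrevocability of outputs, which makes the event at horizon $s$ a local effect at time $s$.
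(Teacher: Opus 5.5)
Your proposal is correct and follows the paper's proof essentially step by step. You reduce the eventual output to a fixed horizon $s$ by a countable union, encode it as a projector $F$ on the output register, lift positivity of $\tr(F\rho_k(v,s))$ to every $m\geq 2s+1$ via \cref{thm:supportlift}, and then invoke zero-error correctness on the admissible $\cN_m$. The only cosmetic difference is in the upper-bound case: you isolate a single value $b$ of positive probability by countable additivity over values, whereas the paper uses finiteness of the reachable output set at horizon $s$, and both are valid.
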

\begin{proof}
The event that a given node eventually outputs is a countable union of finite-time events, so if it has positive probability, the output is already present after some finite horizon $s$ with positive probability. Let $F$ be the projector onto the classical states that record $f(\cN_k)$ in the finite reachable local state space; then $\tr(F\rho_k(v,s))>0$, and \cref{thm:supportlift} gives $\tr(F\rho_m(v,s))>0$ for every $m\geq2s+1$. Every such $\cN_m$ belongs to $\cC$, so zero-error correctness there forces~\eqref{eq:eventual}. For counting, $f(\cN_m)=m|V(\cN)|$ is not eventually constant, so no node of a tower member outputs with positive probability, and a finite union over the nodes excludes any output there. For an upper bound, an output produced at a finite horizon with positive probability takes some particular finite value $B$ with positive probability, the reachable output set being finite at that horizon; lifting it to an $m$ with $m\geq2s+1$ and $m|V(\cN)|>B$ contradicts correctness.
\end{proof}

\begin{corollary}[Total functions and partial certificates]\label{cor:functions}
If $f(\cN_m)$ is not eventually constant in $m$, a zero-error protocol for $f$ produces no output, with probability one, on any tower member. If a zero-error protocol produces an output with positive probability on every member, then $f$ is constant on the whole tower; in particular, exact computation of $f$ at every node forces constancy on the tower.
\end{corollary}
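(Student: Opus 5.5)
Both claims follow from \cref{thm:nocount} applied to the tower $(\cN_m)_{m\geq1}$, once we settle which of its hypotheses are in force and what ``not eventually constant'' buys us. The standing assumptions are those of \cref{thm:nocount}: the tower is $\cC$-admissible and the protocol satisfies the hypotheses of \cref{thm:supportlift}.

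\emph{First claim.} Suppose $f(\cN_m)$ is not eventually constant in $m$, and suppose for contradiction that some node of some member $\cN_k$ outputs with positive probability. A zero-error protocol never outputs a wrong value on an admissible instance, so any output produced there equals $f(\cN_k)$ almost surely. Hence the event that this node outputs $f(\cN_k)$ has positive probability. \Cref{thm:nocount} then gives $f(\cN_m)=f(\cN_k)$ for all sufficiently large $m$, contradicting the assumption. Each tower member has finitely many nodes, so a finite union bound shows that, with probability one, no node of any member outputs. This is the same final step as in the counting case of \cref{thm:nocount}.

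\emph{Second claim.} Suppose instead that on every member $\cN_k$ some node outputs with positive probability. By the argument above, each such output equals $f(\cN_k)$, and \cref{thm:nocount} gives an index $M_k$ with $f(\cN_m)=f(\cN_k)$ for all $m\geq M_k$. Now take any two members $k,k'$. For every $m\geq\max(M_k,M_{k'})$ we have both $f(\cN_m)=f(\cN_k)$ and $f(\cN_m)=f(\cN_{k'})$, so $f(\cN_k)=f(\cN_{k'})$. Thus $f$ is constant on the whole tower, not merely eventually constant. If the protocol computes $f$ exactly, every node outputs after a finite time with probability one, so in particular outputs occur with positive probability on every member, and the previous sentence applies.

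I expect no real obstacle. The only point needing care is that positive output probability implies positive probability of the correct value $f(\cN_k)$, which is exactly where zero-error correctness on the admissible member $\cN_k$ enters. The passage from a finite-time output to a fixed horizon $s$ is already handled inside \cref{thm:nocount}.
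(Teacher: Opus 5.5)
Your proposal is correct and matches the paper's proof. The first claim is the contrapositive of \eqref{eq:eventual}, and the second compares two members through a common tail of $(f(\cN_m))_m$. You also usefully spell out two steps the paper leaves implicit: zero-error correctness on the admissible member $\cN_k$ turns ``some output with positive probability'' into ``output $f(\cN_k)$ with positive probability,'' and a finite union over the nodes completes the argument.
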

\begin{proof}
The first assertion is the contrapositive of~\eqref{eq:eventual}. For the second, take outputs on two members $\cN_k$ and $\cN_\ell$: both values agree with the same tail of $(f(\cN_m))_m$, hence with each other.
\end{proof}

The distinction between exact computation and partial certificates is real. On rings, $f(C_n)=\mathbf1_{\{n\geq4\}}$ is not constant, yet it has a partial zero-error certificate: each node picks one of four colors at random and keeps forwarding the set of colors it has seen, and a node that has seen all four knows that $n\geq4$. This happens with positive probability on every ring with $n\geq4$, never on a shorter one, and nothing is output otherwise. The protocol does not compute $f$ exactly and does not contradict \cref{cor:functions}.

\begin{proposition}[A unique leader gives counting]\label{prop:count}
On finite connected instances of diameter $D$, suppose that a protocol almost surely produces an irrevocable $\mathsf{leader}$ output at some finite time and that no second node ever produces that output. Then it can be augmented so that every node counts exactly within $O(D+1)$ further synchronous rounds after the leader appears. More generally, almost-sure at-most-one-leader safety alone yields a zero-error augmentation, which terminates on the runs where a leader appears in finite time and stays silent on the runs where none does.
\end{proposition}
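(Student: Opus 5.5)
The plan is the standard classical reduction from election to counting. Once a unique leader exists, it roots a BFS spanning tree, and counts are aggregated up the tree by convergecast and then broadcast back down. The quantum part of the protocol plays no role after the leader's output. The augmentation runs in parallel with the original protocol, using classical messages, which are free in size as in the LOCAL model. It never alters the original protocol's election outputs, so the at-most-one-leader guarantee is inherited.

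\textbf{Augmentation.} The node that writes $\mathsf{leader}$ at round $t_0$ starts a wave. It sends a token $\langle\mathrm{explore}\rangle$ on all its ports. A node that first receives the token in round $t$ adopts one arrival port as its parent, choosing the smallest port label among those arriving in that round; port labels are local, so no identifiers are needed. It then forwards the token on all other ports in the next round and tells its parent that it is a child. Since only one leader ever exists, there is a single wave. After $\mathrm{ecc}(\text{leader})\le D$ rounds every node has a parent, and each node knows its children one round later. Next comes a convergecast: a leaf reports $1$, and an internal node reports $1+\sum$ of its children's reports once all have arrived. To detect that all reports have arrived without knowing $D$, each node simply waits for its known set of children, which is fixed one round after it joins. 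The root therefore obtains $n=|V|$ after at most $2D+O(1)$ rounds. It outputs $n$ and broadcasts $n$ down the tree, and every node outputs $n$ within another $D$ rounds. The total is $O(D+1)$ rounds after the leader appears.

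\textbf{Correctness.} The key point is that the wave is initiated only by a node that has irrevocably output $\mathsf{leader}$. By hypothesis, almost surely no second node ever does so, so almost surely all tree messages belong to one wave, and the computed count is the exact size. On the null set where two leaders occur, outputs could be wrong. To obtain zero-error in the strict sense, I would make the augmented protocol's behaviour there irrelevant: zero-error is required only with probability one, and this null set has probability zero, so the probability of a wrong output is zero. Concretely, the event ``some output is wrong'' is contained in ``a second leader appears'', which has probability zero.

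\textbf{General case and main obstacle.} For the general statement, on runs where a leader appears at a finite time the above argument gives termination. On runs where none appears, no wave starts, so no node ever writes a count and the protocol stays silent, which is allowed under zero-error. The step needing the most care is the termination detection in the convergecast without any knowledge of $n$ or $D$. The point I would verify is that child sets are determined locally one round after joining. A potential subtlety is simultaneous arrival on parallel links or loops; this is handled by the deterministic smallest-port parent choice, and tokens arriving on a loop are ignored. Each node then knows exactly how many reports to await, so the aggregation is deterministic and exact.
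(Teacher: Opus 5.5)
Your proposal is correct and follows essentially the paper's argument: a flooding-and-echo spanning tree rooted at the node that outputs $\mathsf{leader}$, a convergecast of subtotals, and a broadcast of the total. Correctness comes from having a single initiator on the probability-one safety event, and the procedure stays silent when no leader appears. The one difference is in how a node learns its children. The paper answers every token, including those on loops and parallel links, with an explicit parent or non-parent acknowledgment. You instead infer the child set from a fixed synchronous timeout, which is valid in this synchronous model, although it takes a constant number of rounds more than your ``one round after it joins''.
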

\begin{proof}
Use flooding and echo~\cite{segall1983distributed}. A node starts the procedure when it outputs $\mathsf{leader}$. The first arrival of a token selects an incoming port as parent, simultaneous arrivals being resolved by a fixed local order of the ports, and the node forwards the token on every other port. Every token, including those on loops and parallel links, is answered with a parent or non-parent acknowledgment, and the flooding builds a rooted spanning tree of depth at most $D$. Each node waits for an acknowledgment on every port where it sent a token, which tells it its children, then for a subtotal from each child; it sends one plus these subtotals to its parent, and the root broadcasts the final sum. The nodes carry out these classical operations whatever the output or stopping state of the original protocol, the classical slots carrying the auxiliary messages alongside its own. Flooding, acknowledgments, aggregation and broadcast take $O(D+1)$ rounds; with at most one initiator every count produced is correct, and without a leader the auxiliary protocol stays silent.

Let $T_L$ be the round of the first leader output, or $\infty$ if there is none. On the safety event, counting at all nodes is complete by $T_{\mathrm{count}}\leq T_L+c(D+1)$ whenever $T_L<\infty$, for an absolute constant $c$, so $\Pr(T_L<\infty)=1$ gives almost-sure finite counting. Neither $\mathbb E[T_L]<\infty$ nor earlier follower outputs are needed.
\end{proof}

\begin{corollary}[No exact leader election]\label{cor:noelection}
On a class $\cC$ containing a $\cC$-admissible cyclic tower there is no exact leader-election algorithm. More strongly, suppose an algorithm is safe on every instance of the class, meaning that with probability one no two nodes ever output $\mathsf{leader}$ and no completed output consists of $\mathsf{follower}$ alone. Then on every tower member the probability that some node outputs $\mathsf{leader}$ is zero, and so is the probability that all nodes output.
\end{corollary}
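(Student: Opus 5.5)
The plan is to reduce the corollary to \cref{thm:nocount} through \cref{prop:count}. The first claim follows from the stronger one: an exact election algorithm is safe on every instance of $\cC$, and it makes every node output in finite time with probability one. So it suffices to prove the stronger statement.

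Fix a safe algorithm and apply the second part of \cref{prop:count}. Since at most one node ever outputs $\mathsf{leader}$ with probability one, the flooding-and-echo augmentation is a zero-error counting protocol on $\cC$. Every count it produces is correct, and it stays silent on the runs without a leader. I first check that the augmented protocol still satisfies the hypotheses of \cref{thm:supportlift}.
\begin{itemize}
\item It is uniform under lifts: flooding, echo and the local port order depend only on the local program and the attachment, both of which are lifted.
\item It is product-prepared, because only classical registers initialized to fixed values are added.
\item It has finite local resources at each fixed horizon. At horizon $s$, tokens, acknowledgments and subtotals are bounded by quantities that depend only on $s$ and the base degree, since a subtotal at time $s$ counts nodes within distance $O(s)$.
\end{itemize}
The last point is the step needing care. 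I would argue it by induction over rounds, exactly as in the finite-branching argument of \cref{lem:purification}.

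Then \cref{thm:nocount} applies to the $\cC$-admissible tower. Counting is not eventually constant, so with probability one no node of any tower member produces a count. On a run where some node outputs $\mathsf{leader}$ at a finite time, the augmentation terminates counting within $O(D+1)$ further rounds of that run. Here $D$ is the finite diameter of that member. Intersecting with the safety event, which has probability one, the event ``some node outputs $\mathsf{leader}$'' is contained, up to a null set, in ``some node outputs a count.'' The latter event has probability zero, so the former does too.

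Finally, consider runs in which all nodes output. By safety, a completed output is not all $\mathsf{follower}$, so almost surely some node outputs $\mathsf{leader}$ on such a run. That event has probability zero, so the probability that all nodes output is zero as well. This rules out termination with probability one, and hence exact election on any tower member.
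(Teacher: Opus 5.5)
Your proposal is correct and follows the paper's own route. You compose the algorithm with the leader-triggered flooding-and-echo counting of \cref{prop:count}, invoke \cref{thm:nocount} to rule out a finite-time leader on tower members, and use the no-all-$\mathsf{follower}$ safety clause to rule out completion. Two details the paper's proof leaves implicit are handled soundly in your version, or are easily supplied: the explicit check that the augmented protocol remains uniform, product-prepared and within the finite-horizon resource bound is correct, and the step that an exact algorithm is safe holds because outputs are irrevocable and termination is almost sure.
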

\begin{proof}
Compose the protocol with the leader-triggered procedure of \cref{prop:count}. Since at most one leader ever appears, the composition is zero-error for counting, and a leader appearing at a finite time with positive probability on a tower member would give a correct counting output there with positive probability, against \cref{thm:nocount}. With no leader and no valid all-follower completion, completion has probability zero. An exact election algorithm satisfies the safety conditions because its outputs are irrevocable and it terminates almost surely.
\end{proof}

\paragraph*{Which classes and which instances?}
The impossibility of exact solutions applies to all connected graphs, to connected graphs of maximum degree at most any fixed $\Delta\geq2$, to connected regular graphs of any fixed degree at least two, to connected bipartite graphs, to rings and to tori of fixed dimension. Each class contains a graph with a cycle and a connected cyclic tower preserving its defining property, and a simple base graph yields simple covers when the class requires simplicity. For tori, cutting between two consecutive coordinate hyperplanes gives
\[
 \Z_{n_1m}\times\Z_{n_2}\times\cdots\times\Z_{n_d},\qquad m\geq1.
\]
A zero-error protocol therefore cannot determine an unknown period on that tower, even when the other periods are fixed and known. 

The stronger silence statement is about the tower members, not about every graph of a class containing the tower. On the class of all connected graphs, a node of degree one whose only neighbor reports degree one can correctly output $2$, the network being $K_2$; this is a partial counting algorithm, not an exact solution on the whole class. Trees have no proper connected covers and are not excluded merely because towers exist elsewhere in the class. For rings, \cref{thm:nocount} answers the unknown-size and unknown-bound question of~\cite[Question 1]{aharonov2017dining} negatively, under the stated assumptions, including protocols with almost-sure finite termination and no deterministic time-bound. If an asynchronous algorithm must be correct under every admissible schedule and admits a fair synchronous schedule represented by this finite-horizon model, the same impossibility follows by restricting to that schedule; this reduction does not claim support lifting for arbitrary asynchronous schedules.

%% file: stacs/sections/07-discussion.tex
\section{Scope and further questions}\label{sec:discussion}

We have established a quantum covering principle for local supports: under our assumptions, every local outcome possible on a member of a connected cyclic tower remains possible on all sufficiently large members. Quantum statistics can still distinguish particular sizes, as the two-round witness shows. But no finite-time local observation can certify the size against arbitrarily large admissible members. Interference defeats the lifting of global records; zero-error lower bounds only need the lifting of local supports.

{Product preparation excludes classical correlations between nodes as well as entanglement:} the separable, permutation-invariant state $
 \omega_n=\frac1n\sum_{j=1}^{n}
 \ketbra{0\cdots1_j\cdots0}{0\cdots1_j\cdots0}$
yields exactly one leader as soon as each node measures its bit. Independent local memoryless noise with a finite-dimensional dilation is covered by the same proof, by keeping each fresh local environment in the dilation. {Environments initially correlated across different nodes and size-dependent shared initial resources that do not lift uniformly are outside our assumptions. Temporal memory retained within one node can instead be included in its active system, provided the enlarged nodes satisfy product preparation and the uniform finite-horizon resource bound.}

The universal-cover result gives a common upper bound on the supports of all finite quotients and equality of local states on covers of large girth, but a full characterization of exact anonymous quantum computation remains open.  For bounded-error computation, the relevant question is quantitative: how small can an outcome's probability become when it is transferred to a larger network?  The theorem does not provide a probability bound that applies to all protocols and horizons. Extending the support theorem to asynchronous dynamics remains an open question.

\paragraph*{Use of generative AI.}
AI tools were used for double checking proofs and improve drafting. Work with Fable (Claude) and Astra (ChatGPT) contributed to producing exploratory simulators, verification scripts, to scrutinize proofs and model assumptions, check literature comparisons, produce separate verification implementations, and revise and typeset this version. The original idea, all theorems, lemmas and proofs have been developed entirely by humans, for humans. Both authors have contributed equally.

%% file: stacs/sections/appendix.tex
\section{Exact certificate for the two-round witness}\label{app:sharpness}

We prove \cref{prop:sharpness} by hand, with no numerical rank test. Write $X,Y,Z$ for the Pauli matrices, with $Y=iXZ$, and $L_j,R_j$ for the two qubits at node $j$. The initial state is stabilized by all single-qubit $Z$ operators, and the gates are Clifford, since
\[
 \iSWAP=(S\otimes S)\CZ\SWAP,\qquad S=\operatorname{diag}(1,i).
\]
Conjugate the initial stabilizer generators through $U_1$, the first link layer, $U_2$ and the second link layer, in this order. For $n\geq3$, the resulting generators at node label $j$ are the pair
\begin{align}
 g_j&=-Z_{R_{j-2}}X_{L_j}Y_{R_j}Z_{L_{j+2}},\label{eq:stabilizer-g}\\
 h_j&=-Z_{R_{j-2}}Y_{L_j}X_{R_j}Z_{L_{j+2}}.\label{eq:stabilizer-h}
\end{align}
Indices are taken modulo $n$. Their product is
\begin{equation}\label{eq:local-zz}
 g_jh_j=Z_{L_j}Z_{R_j}.
\end{equation}
A stabilizer state on $N$ qubits has density matrix $2^{-N}\sum_{P\in\mathcal S}P$, and the partial trace kills every Pauli that acts nontrivially outside the retained node. The local state at node $j$ is therefore $\tfrac14\sum_{P\in\mathcal S_j}P$, where $\mathcal S_j$ is the subgroup of stabilizers supported entirely at node $j$.

\paragraph*{The case $n=4$.}
The exterior factors in~\eqref{eq:stabilizer-g} lie together at node $j+2$. Multiplying by its local $ZZ$ generator from~\eqref{eq:local-zz} produces $-X_{L_j}Y_{R_j}$; likewise~\eqref{eq:stabilizer-h} produces $-Y_{L_j}X_{R_j}$. These are two independent commuting local stabilizers, so the local state is pure and
\[
 \rho_{C_4}=\tfrac14(I-X\otimes Y-Y\otimes X+Z\otimes Z)
            =\ketbra{a_-}{a_-}.
\]

\paragraph*{The cases $n=3$ and $n\geq5$.}
Here $j-2$ and $j+2$ are distinct nodes, neither of them equal to $j$. For a product of generators to be supported at $j$ alone, the $X$ components at every other node $\ell$ force it to contain both of $g_\ell,h_\ell$ or neither, and containing both contributes only $Z_{L_\ell}Z_{R_\ell}$. Containing just one of $g_j,h_j$ leaves a single exterior $Z$ at each of two distinct nodes, which no product of local $ZZ$ operators can remove. Hence the only local stabilizers are $I$ and $Z_{L_j}Z_{R_j}$, and
\[
 \rho_{C_n}=\tfrac14(I+Z\otimes Z)
 =\tfrac12\bigl(\ketbra{00}{00}+\ketbra{11}{11}\bigr).
\]

\paragraph*{The case $n=2$.}
Direct conjugation, keeping the two parallel links as swaps of distinct port pairs, maps the four initial generators to
\[
 Y_{L_0}X_{R_0},\quad X_{L_0}Y_{R_0},\quad
 Y_{L_1}X_{R_1},\quad X_{L_1}Y_{R_1}.
\]
Thus
\[
 \rho_{C_2}=\tfrac14(I+X\otimes Y+Y\otimes X+Z\otimes Z)
            =\ketbra{a_+}{a_+}.
\]
Finally $\langle a_+|a_-\rangle=0$: the supports on $C_2$ and $C_4$ are orthogonal, which proves the non-inclusion and completes the certificate.

%% file: main.bbl
\begin{thebibliography}{10}

\bibitem{aharonov2017dining}
Dorit Aharonov, Maor Ganz, and Loick Magnin.
\newblock Dining philosophers, leader election and ring size problems, in the
  quantum setting.
\newblock {\em arXiv preprint arXiv:1707.01187}, 2017.
\newblock \href {https://doi.org/10.48550/arXiv.1707.01187}
  {\path{doi:10.48550/arXiv.1707.01187}}.

\bibitem{angluin1980local}
Dana Angluin.
\newblock Local and global properties in networks of processors.
\newblock In {\em Proceedings of the 12th Annual ACM Symposium on Theory of
  Computing (STOC)}, pages 82--93, 1980.

\bibitem{arrighi2019overview}
Pablo Arrighi.
\newblock An overview of quantum cellular automata.
\newblock {\em Natural Computing}, 18(4):885--899, 2019.
\newblock \href {https://doi.org/10.1007/s11047-019-09762-6}
  {\path{doi:10.1007/s11047-019-09762-6}}.

\bibitem{cirac2017mpu}
J.~Ignacio Cirac, David P{\'e}rez-Garc{\'i}a, Norbert Schuch, and Frank
  Verstraete.
\newblock Matrix product unitaries: Structure, symmetries, and topological
  invariants.
\newblock {\em Journal of Statistical Mechanics: Theory and Experiment},
  2017(8):083105, 2017.
\newblock \href {https://doi.org/10.1088/1742-5468/aa7e55}
  {\path{doi:10.1088/1742-5468/aa7e55}}.

\bibitem{coiteuxroy2024coloring}
Xavier Coiteux-Roy, Francesco d'Amore, Rishikesh Gajjala, Fabian Kuhn,
  Fran{\c{c}}ois Le~Gall, Henrik Lievonen, Augusto Modanese, Marc-Olivier
  Renou, Gustav Schmid, and Jukka Suomela.
\newblock No distributed quantum advantage for approximate graph coloring.
\newblock In {\em Proceedings of the 56th Annual ACM Symposium on Theory of
  Computing (STOC 2024)}, 2024.
\newblock arXiv:2307.09444.
\newblock URL: \url{https://arxiv.org/abs/2307.09444}.

\bibitem{fannes1992finitely}
Mark Fannes, Bruno Nachtergaele, and Reinhard~F. Werner.
\newblock Finitely correlated states on quantum spin chains.
\newblock {\em Communications in Mathematical Physics}, 144(3):443--490, 1992.
\newblock \href {https://doi.org/10.1007/BF02099178}
  {\path{doi:10.1007/BF02099178}}.

\bibitem{farrelly2020review}
Terry Farrelly.
\newblock A review of quantum cellular automata.
\newblock {\em Quantum}, 4:368, 2020.
\newblock \href {https://doi.org/10.22331/q-2020-11-30-368}
  {\path{doi:10.22331/q-2020-11-30-368}}.

\bibitem{itai1990symmetry}
Alon Itai and Michael Rodeh.
\newblock Symmetry breaking in distributed networks.
\newblock {\em Information and Computation}, 88(1):60--87, 1990.

\bibitem{linial1992locality}
Nathan Linial.
\newblock Locality in distributed graph algorithms.
\newblock {\em SIAM Journal on Computing}, 21(1):193--201, 1992.

\bibitem{peleg2000distributed}
David Peleg.
\newblock {\em Distributed Computing: A Locality-Sensitive Approach}.
\newblock SIAM, 2000.

\bibitem{perezgarcia2007mps}
David P{\'e}rez-Garc{\'\i}a, Frank Verstraete, Michael~M. Wolf, and J.~Ignacio
  Cirac.
\newblock Matrix product state representations.
\newblock {\em Quantum Information and Computation}, 7(5--6):401--430, 2007.
\newblock URL: \url{https://arxiv.org/abs/quant-ph/0608197}.

\bibitem{roget2024scheme}
Mathieu Roget and Giuseppe Di~Molfetta.
\newblock A quantum walk-based scheme for distributed searching on arbitrary
  graphs.
\newblock In {\em Cellular Automata Technology (ASCAT 2024)}, volume 2021 of
  {\em Communications in Computer and Information Science}, pages 72--83.
  Springer, 2024.
\newblock arXiv:2310.10451.
\newblock \href {https://doi.org/10.1007/978-3-031-56943-2_6}
  {\path{doi:10.1007/978-3-031-56943-2_6}}.

\bibitem{roget2026model}
Mathieu Roget and Giuseppe Di~Molfetta.
\newblock A quantum walk inspired model for distributed computing on arbitrary
  graphs.
\newblock {\em Natural Computing}, 25:24, 2026.
\newblock arXiv:2502.21232.
\newblock \href {https://doi.org/10.1007/s11047-026-10078-5}
  {\path{doi:10.1007/s11047-026-10078-5}}.

\bibitem{segall1983distributed}
Adrian Segall.
\newblock Distributed network protocols.
\newblock {\em IEEE Transactions on Information Theory}, 29(1):23--35, 1983.

\bibitem{tani2017solitude}
Seiichiro Tani.
\newblock A fast exact quantum algorithm for solitude verification.
\newblock {\em Quantum Information and Computation}, 17(1--2):15--40, 2017.
\newblock arXiv:1612.05317.
\newblock URL: \url{https://arxiv.org/abs/1612.05317}.

\bibitem{tani2012exact}
Seiichiro Tani, Hirotada Kobayashi, and Keiji Matsumoto.
\newblock Exact quantum algorithms for the leader election problem.
\newblock {\em ACM Transactions on Computation Theory}, 4(1):1:1--1:24, 2012.
\newblock URL: \url{https://arxiv.org/abs/0712.4213}, \href
  {https://doi.org/10.1145/2141938.2141939}
  {\path{doi:10.1145/2141938.2141939}}.

\bibitem{yamashita1996computing}
Masafumi Yamashita and Tsunehiko Kameda.
\newblock Computing on anonymous networks: Part {I}---characterizing the
  solvable cases.
\newblock {\em IEEE Transactions on Parallel and Distributed Systems},
  7(1):69--89, 1996.

\end{thebibliography}
